\newcommand{\BibTeX}{\rm B\kern-.05em{\sc i\kern-.025em b}\kern-.08em\TeX}
\newtheorem{assumptions}{Assumptions}
\newtheorem{definition}{Definition}
\newtheorem{theorem}{Theorem}
\newtheorem{proposition}{Proposition}
\newtheorem{lemma}{Lemma}
\DeclareMathOperator*{\argmax}{argmax}
\DeclareMathOperator*{\argmin}{argmin}
\definecolor{rougeINRIA}{RGB}{230, 51, 18}
\definecolor{grisbleuINRIA}{RGB}{56,66,87}
\definecolor{orangeINRIA}{RGB}{240, 126, 38}
\definecolor{lilasINRIA}{RGB}{155,0,79}
\definecolor{bleuINRIA}{RGB}{20,136,202}
\definecolor{vertINRIA}{RGB}{149,193,31}
\definecolor{jauneINRIA}{RGB}{255,205,28}
\definecolor{mauveINRIA}{RGB}{101,97,169}
\definecolor{bleuclairINRIA}{RGB}{137,204,202}
\definecolor{vertclairINRIA}{RGB}{199,214,79}
\newcommand*{\defeq}{:=}
\begin{document}

\title{Game Theory and Multi-Agent Reinforcement Learning for Zonal Ancillary Markets}

\author{Francesco Morri\thanks{Corresponding author: francesco.morri@inria.fr}, 
        Hélène Le Cadre,
        Pierre Gruet\thanks{P. Gruet is with EDF R$\&$D and FiME laboratory, Paris, France.},
        Luce Brotcorne\thanks{F. Morri, H. Le Cadre and L. Brotcorne are with Inria, Univ. Lille, CNRS, Centrale Lille, UMR 9189 CRIStAL, F-59000 Lille, France.}}

% The paper headers
\markboth{PREPRINT}%
{F. Morri \MakeLowercase{\textit{et al.}}: Game Theory and Multi-Agent Reinforcement Learning for Zonal Ancillary Markets}

%\IEEEpubid{0000--0000/00\$00.00~\copyright~2025 IEEE}
% Remember, if you use this you must call \IEEEpubidadjcol in the second
% column for its text to clear the IEEEpubid mark.

\maketitle

\begin{abstract}
We characterize zonal ancillary market coupling relying on noncooperative game theory. We formulate the ancillary market as a multi-leader single follower Stackelberg game, that we subsequently cast as a generalized Nash game with side constraints and nonconvex feasible sets. We determine conditions for the existence of an equilibrium and show that the game has a generalized potential structure. To compute market equilibrium, we rely on two exact approaches: an integrated optimization reformulation involving complementarity constraints, that can be solved using a nonlinear optimization solver, and Gauss-Seidel best-response. We compare these exact methods against multi-agent deep reinforcement learning (MARL). Simulations on real data from Germany and Austria ancillary markets show that MARL better capture intricate interactions between market agents under varying information-sharing scenarios, achieving faster learning rates to reach a stationary solution, than both exact approaches. Further, MARL results in smaller market costs compared to the exact methods, at the cost of higher variability in the profit allocation among stakeholders. Finally, on the policy side, we show numerically that tighter coupling between adjacent supply zones reduces the average market cost of the largest zone.
\end{abstract}

\begin{IEEEkeywords}
Ancillary Markets, Strategic Bidding, Game Theory, MARL.
\end{IEEEkeywords}

\section{Introduction}
A characteristic feature of electricity is the impossibility to store it in large proportions. Hence, it must be produced exactly when it is consumed, which raises a need for a real-time monitoring of the balance between production and consumption to guarantee the stability of the power grid. When imbalances occur, mechanisms are automatically triggered to restore the nominal frequency of the electric system: a change in power production is made by some production units, which are then said to provide \emph{ancillary services}, in exchange of a monetary contribution.
In Western Europe, many market zones are progressively switching from an automatic and mandatory involvement of qualified production units with an amount that is fixed by the system operator to a market mechanism based on daily auctions in which the market operator publishes a demand and then the producers submit offer curves that are aggregated until they meet the demand. From the point of view of a market actor, submitting a bid in such an auction leaves some space for strategic bidding, which requires information about the other market participants. This information, together with the design of a relevant bidding strategy, should enable market actors to engage their production assets in the best way in ancillary services markets, which will then give them a fair idea of the revenue they can expect.
Although these ancillary services markets exist in many countries, the bidding zones can be somehow associated: for instance the Austrian and German areas can share their bids, subject to some constraints and to the availability of the transmission network across both countries. Market participants might be willing to investigate the impact of this shared market structure and to see if it stabilizes their revenues over time.
The main challenges to address are the characterization of the market equilibria and the understanding of their evolution when market zones get interconnected: from the perspective of the market operator, it is important to gain insight into the behaviors of the stakeholders and to gauge the opportunity to enforce market integration over multiple countries if this can, e.g., mitigate price peak events in one country thanks to interconnection.
In this article, we use noncooperative game theory to model ancillary services markets. Market equilibria are computed relying on reformulations of the game and multi-agent deep reinforcement learning (MARL). Our flexible framework allows us to assess the consequences of the interconnection of market zones by dynamically changing the coupling factors. The approach can be generalized to any number of zones.
\IEEEpubidadjcol
\subsection{Related Work}
\label{sec:related}
There are two main directions for research on ancillary (and zonal) market design: a) working on the design of the market itself, studying different formulations and their features \cite{lecadre2019}; b) working on the strategic bidding aspect \cite{morri2024}, focusing more on the point of view of the energy producers.
In \cite{oren_design_2001}, Oren discusses possible ways of clearing the electricity market and selecting bids, proposing different options for the market operator's objective function, the settlement rules and the pricing of the products. Previous works are mostly theoretical, comparing the efficiency of different market designs relying on noncooperative game theory. A new market design is proposed by Ela et al. in \cite{ela_effective_2012}, taking into account the uncertainty resulting from the massive penetration of wind power in the grid. 
In the same vein, \cite{ela_market_2014,ela_market_2014_2} discuss the design of an ancillary services market focused on primary frequency response, and then simulate this design under different scenarios. Another example exploring the impact of renewable based generators is \cite{merten2020bidding}, which tackles the optimization of bidding strategies for ancillary markets involving battery energy storage systems. We conclude the discussion on the formulation and features of ancillary services markets with \cite{oureilidis_ancillary_2020}, which explores the pros. and cons. of ancillary services market implementation at the distribution grid level, considering frequency response, voltage power regulation and other grid related parameters.
Regarding the computation of bidding strategies and solutions for ancillary and zonal markets, a first contribution can be found in \cite{morinec2008optimal}, where the authors model the market as a two-player noncooperative game, analyzing the strategies of the two producers, and solving the game to obtain different market equilibria. In \cite{meyn_ancillary_2014}, a control based decentralized solution is proposed to create ancillary services helping with power imbalances caused by renewable generation sources. Sarfati and Holmberg in \cite{sarfati_simulation_2020} formulate a zonal market as a two-stage stochastic problem, which they then recast as a mixed-integer bilinear program that they subsequently solve relying on a nonlinear optimization solver. This solution is computed from an external point of view, considering all producers at once. In \cite{jay2020game}, the market is modeled as a Stackelberg game where the market operator is the leader and the producers are the followers; the solution is then obtained by recasting the problem as a mathematical program with equilibrium constraints. Finally, \cite{deng_distributed_2024} focuses on a game-theoretic model with nonlinear convex feasible sets, which in turn is solved by simulation relying on a linear dynamic model. The authors then analyze theoretically the existence of a variational equilibrium and propose a distributed algorithm to compute it.
%Another interesting approach to obtain a bidding strategy is in \cite{herstad_bidding_2024}, where extreme value theory is used to develop an analytical solution from the point of view of a participant in the market. The authors then show how their solution outperforms standard strategy, especially focusing on out-of-sample scenarios.
The use of learning algorithm for strategic bidding has become more common in the last years. In \cite{ye_deep_2020}, a day-ahead electricity market bidding problem with multiple strategic producers is formulated as a Markov game with unobservable information, which is solved relying on a model-free and data-driven approach. Regarding ancillary markets, the application of learning based algorithm is not particularly widespread, nonetheless, there are some examples. In \cite{abgottspon_strategic_2013}, Abgottspon and Andersson focus on ancillary markets with hydro-based producers. Their simulation approach uses discrete inputs and discretized bids for the learning producers. The results are not definitive, but the learning agents show sign of strategic behavior. A more advanced approach can be found in \cite{du_approximating_2021}, where the implemented learning algorithm utilizes continuous inputs and outputs and is specifically designed to handle multi-agent settings. The producers here only bid a price, instead of a price-quantity couple as in our case, and the focus is on the grid optimization.
\begin{comment}
   \textcolor{magenta}{Not sure this last paragraph makes much sense now.}\\
We conclude the section with some remarks about theoretical results for deep reinforcement learning, which are particularly hard to obtain. One way to deal with multi agent reinforcement learning theoretically is to use mean field theory \cite{yang_mean_2018, gu_mean-field_2021}, which allows to obtain convergence guarantees, granted that the agents are homogeneous and in cooperative settings. There are some convergence guarantees for applications to simple games, such as matrix \cite{bowling_multiagent_2002} or zero-sum games \cite{cen_faster_2022}. Finally, the branch of stochastic games produced some important theoretical results about the convergence in multi agent problems \cite{mguni_learning_2021}, even if with simplified interactions and settings.
\end{comment}
\subsection{Contributions and Organization}
Our main contributions are summarized as follows:
\begin{itemize}
    \item we formulate a multi-zone ancillary market model as a Stackelberg game, that we subsequently reformulate as a generalized Nash game with generalized potential structure. We obtain theoretical guarantees for the existence of market equilibria;
    \item we propose three algorithms to compute market outcomes: an integrated optimization approach and Gauss-Seidel best-response as exact methods; a multi-agent deep reinforcement learning (MARL) approach, that can be implemented in a decentralized fashion, reaching a stationary solution;
    \item we compare the three algorithm performance on real data from the Germany-Austria market, comparing social cost, producers' profit variability and computation times. Finally, comparing multiple coupling scenarios, we observe a decrease in the largest zone market cost with tighter couplings.%, a trend that has already been observed in the literature.
\end{itemize}
This paper is organized as follows: in Section~\ref{sec:problem}, we formulate the ancillary services market as a noncooperative game, in Section~\ref{sec:analysis}, we derive conditions for existence of market equilibria, that can be subsequently computed relying on the algorithms introduced in Section~\ref{sec:algorithms}. Finally, in Section~\ref{sec:simul}, we implement the algorithms on real data from Austria and Germany, that we compare numerically using key performance metrics. We conclude in Section~\ref{sec:conclusions}.

\begin{comment}
    \subsection{Main Contributions}
The main contributions of this work are the following:
\begin{itemize}
    \item We formulate a multi-zone ancillary market model as a multi-leader single follower Stackelberg game that we subsequently reformulate as a Generalized Nash Game with generalized potential structure. We then obtain theoretical guarantees for the existence of Nash Equilibria for the market.
    \item We propose three algorithmic solutions for the market model: two exact approaches to compute an equilibrium, integrated optimization and Gauss-Seidel best-response, and a multi-agent deep reinforcement learning (MARL) simulation. The latter allows us to run a decentralized simulation of the energy market with strategic learning and non-strategic agents.
    \item We conclude by analyzing and comparing the three algorithms on real data from the Germany-Austria market. We compare the different solutions using performance metrics (social cost, profit, fairness of the market), and we also test multiple settings for the coupling between the zones, showing that our MARL solution presents an decrease in cost the higher the coupling is, an effect that has already been observed in the literature.
\end{itemize}
\end{comment}

\subsection*{Notations}
Given $\mathcal{A} \subset \mathbb{R}^n$, $|\mathcal{A}|$ refers to the cardinality of the set.
Let $x_n \in \mathbb{R}^{m_n}$, with $m_n \in \mathbb{N}^\star$, we define $\text{col}((x_n)_n)$ as the stack of the $(x_n)_n$ vectors.
Given a vector of decision variables $x:=(x_n)_{n\in\mathcal{N}}$, we define $x_{-n} \defeq (x_m)_{m\in\mathcal{N}\setminus\{n\}}$ as the concatenation of all the players' decision variables except $n$. 
When we write $\nabla^2 f(x^\star)$, with $x^\star\in\mathbb{R}^d$, $d \in \mathbb{R}^\star$, we consider the Hessian matrix of the function $f$ evaluated in $x^\star$, where the element $(i,j)$ refers to $\partial_{x_i,x_j}f(x^\star),\,\forall i,j\in \{1,\hdots,d\}$.
\begin{table}[t]
\caption{List of variables, index and sets.}
    \label{tab:notation}
    \centering
    \begin{tabular}{c|c}
    \toprule
         Symbol & Meaning \\ \midrule
         $\mathcal{N}$ & Set of $N$ producers \\
         $\mathcal{Z}$ & Set of $Z$ zones \\
         $z(n)$ & Zone of producer $n$\\
         $\mathcal{N}(z)$ & Subset of producers in $z$ \\
         $K_n$ & Maximum number of bids for producer $n$\\
         $B$ & Total number of bids \\
         $\Delta_{n,k}$ & Capacity offered by producer $n$ for bid $k$\\
         $\underline{\Delta}_n,\bar{\Delta}_n$ & Lower and upper capacity for producer $n$\\
         $\pi_{n,k}$ & Price offered by producer $n$ for bid $k$\\
         $\underline{\pi}_n,\bar{\pi}_n$ & Lower and upper price for producer $n$\\
         $x_{n,k}^z$ & Fraction of accepted bid $k$ of producer $n$ in zone $z$\\
         \bottomrule
    \end{tabular}
\end{table}
\section{An Adversarial Game Formulation}
\label{sec:problem}
The ancillary services market can be formulated as a multi-leader single follower Stackelberg game. At the upper level, producers (leaders) bid strategically to maximize their revenues while anticipating the market clearing. At the lower level, the market operator (follower) determines the fractions of the bids that are accepted in order to minimize the total cost it has to pay to the producers to meet the zonal demands, while satisfying zonal, interzone and bid-related constraints. Because of the opposite objectives of the producers and market operator, the problem can be interpreted as an adversarial game, also called a zero-sum game.

We define $\mathcal{N}\defeq\{1, \hdots, N\}$ as the set of $N$ producers and $\mathcal{Z}\defeq\{1,...,Z\}$ as the set of zones.
%We let $\mathcal{B}$ be the set of bids.
Each producer $n \in \mathcal{N}$ can bid up to $K_n \in \mathbb{N}^\star$ bids. Thus, the maximum number of bids is $B\defeq\sum_{n\in\mathcal{N}}K_n$. Further, each zone $z \in \mathcal{Z}$ has three input parameters: the total requested demand $D_z \geq 0$, the export limit $E_z \geq 0$ and the core portion $C_z \geq 0$. For all $n\in\mathcal{N}$, $z(n)\in\mathcal{Z}$ defines the zone of producer $n$ and $\mathcal{N}(z)\subseteq\mathcal{N}$ is the subset of producers in zone $z$. Each producer $n$'s decision variable is a finite sequence of bids 
\begin{equation}\label{eq:full_prod_vars}
y_n \defeq \text{col}(\Delta_n,\pi_n) = \text{col}\Big((\Delta_{n,l},\pi_{n,l})_{l=1,\hdots,K_n}\Big),
\end{equation}
where $\Delta_n,\pi_n$ are producer $n$'s finite sequences of offered capacities and prices, respectively. We denote by $y\defeq (y_n)_{n}$ the producers' strategy profile. 

The decision variables of the market operator are $x\defeq(x_{n,k}^z)_{n,k,z} \in [0,1]^{B\times Z}$, where $x_{n,k}^z$ represents the fraction of bid activated in each zone. A summary of the definitions is displayed in Tab.~\ref{tab:notation}.
%, while we leave in the supplementary material \cite{morri_supplementary_2025} a table with the full nomenclature.

\subsection{Producers}
Each producer $n \in \mathcal{N}$ has a capacity $\bar{\Delta}_n\geq 0$ and a marginal price $\underline{\pi}_n \geq 0$, and can either submit a single price (with full capacity) or a finite sequence of price and quantity pairs satisfying the following constraints involving finite upper and lower bounds:
\begin{subequations}
    \begin{align}
        &  \underline{\pi}_n\leq \pi_{n,k}\leq \bar{\pi}_n,\quad \forall\; k\in\{1,\dots,K_n\},\label{eq:price_bound}\\
        & \underline{\Delta}_n\leq\Delta_{n,k}\leq\bar{\Delta}_n,\quad\forall\; k\in\{1,\dots,K_n\},\label{eq:capacity_bound}\\
        &  \sum_{k=1}^{K_n}\Delta_{n,k} \leq \bar{\Delta}_n.\label{eq:tot_capacity_bound}
    \end{align}
    \label{eq:producer_const}
\end{subequations}
We define producer $n$'s feasible set as $\mathcal{Y}_n \defeq \{y_n\in \mathbb{R}^{2\times K_n}\,|\, y_n \mbox{ satisfies Eqs.\,\eqref{eq:producer_const}} \}$. The objective of producer $n$ is to maximize its revenue defined as follows:
\begin{equation}
    J_n(x_n,y_n) \defeq \sum_{k=1}^{K_n}\pi_{n,k}\Delta_{n,k}\left(\sum_{z\in \mathcal{Z}} x_{n,k}^z\right).
    \label{eq:J_n}
\end{equation}

\subsection{Market Operator}
The market operator deals with four sets of constraints that depend on the capacities $\Delta_{n,k}, \forall k\in[1,K_n],n\in\mathcal{N}$ submitted by the producers\\
\textbf{Zonal Demand}
\begin{equation}
    \sum_{n\in\mathcal{N}}\sum_{1\leq k\leq K_n}x_{n,k}^z\Delta_{n,k}\geq D_z,\quad \forall z\in\mathcal{Z}. \label{eq:demand_c}
\end{equation}
\textbf{Capacity Limit}
\begin{equation}
    \sum_{z\in\mathcal{Z}} x_{n,k}^z\leq 1,\quad \forall k\in[1,\dots,K_n],\quad\forall n\in\mathcal{N}. \label{eq:accepted_c}
\end{equation}
\textbf{Export Limit}
\begin{equation}
    \sum_{n\in\mathcal{N}(z)}\sum_{1\leq k\leq K_n}\sum_{z'\neq z}x_{n,k}^{z'}\Delta_{n,k} \leq E_z,\quad \forall z\in\mathcal{Z}. 
    \label{eq:export_c}
\end{equation}
\textbf{Core Portion}
\begin{equation}
    \sum_{n\in\mathcal{N}(z)}\sum_{1\leq k\leq K_n} x_{n,k}^z\Delta_{n,k}\geq C_z,\quad \forall z\in\mathcal{Z}.
    \label{eq:core_c}
\end{equation}
Eq.~\eqref{eq:demand_c} ensures that the total demand in each zone is met, Eq.~\eqref{eq:core_c} ensures that there is a given amount of the demand of a specific zone satisfied by the producers in that zone, while Eq.~\eqref{eq:export_c} gives a cap to the amount of energy that can be exported.
Finally, Eq.~\eqref{eq:accepted_c} captures the fact that for a given bid $(n,k)$, the sum of the fractions activated in each zone is normalized. To highlight the mathematical structure of the problem, constraints \eqref{eq:demand_c}-\eqref{eq:core_c} can be compactly written under the form
$$\Phi(x,y) \leq 0,$$ where $y:=(y_n)_n$, $y_n$ is defined in Eq.~\eqref{eq:full_prod_vars}, and $\Phi:[0,1]^{B\times Z}\times\prod_{n\in\mathcal{N}}[[\underline{\Delta}_n,\bar{\Delta}_n]^{K_n}\times[\underline{\pi}_n,\bar{\pi}_n]^{K_n}]\to \mathbb{R}^{3Z+N}$. We denote $$\mathcal{X}(y):=\{x\in[0,1]^{B\times Z} \,|\, \Phi(x,y)\leq0\},$$ the feasible set of the market operator which depends on the profile of strategies of all the producers $y:=(y_n)_n$.
The market operator aims to minimize the total cost it has to pay to the producers. Its objective function takes the form
\begin{equation}
    J_{\text{MO}}(x,y) \defeq \sum_{n\in\mathcal{N}}\sum_{1\leq k\leq K_n}\left(\sum_{z\in\mathcal{Z}}x_{n,k}^z\right)\Delta_{n,k}\pi_{n,k}.
    \label{eq:J_mo}
\end{equation}
The market operator determines the fractions of bids to activate in each zone $(x_{n,k}^z)_{n,k,z}$, under constraints \eqref{eq:demand_c}-\eqref{eq:core_c}
\begin{subequations}
    \begin{align}
    \min_{x} & \quad J_{\text{MO}}(x,y), \label{eq:market_obj}\\
    \text{s.t.} & \quad x\in\mathcal{X}(y).
\end{align}
\label{eq:market_clearing}
\end{subequations}
We report a schematic view of a two-zone market in Fig.~\ref{fig:market_diagram}, where we highlight with arrows the different constraints.
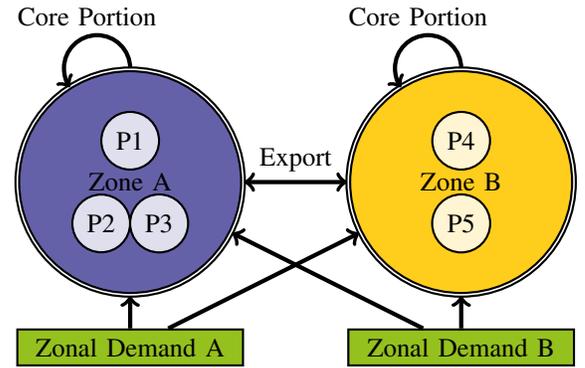
\begin{figure}[t]
    \centering
    \begin{tikzpicture}[scale=1.]
        \coordinate (na) at (0,0);
        \coordinate (nb) at (5,0);
        \node[circle, thick, double, fill=mauveINRIA, draw, minimum size=3cm] (a) at (na) {Zone A};
        \node[circle, thick, double, fill=jauneINRIA, draw, minimum size=3cm] (b) at (nb) {Zone B};
        \node[rectangle, thick, fill=vertINRIA, draw, minimum width=3cm] (d_a) at ($ (na) + (0,-2) $) {Zonal Demand A};
        \node[rectangle, thick, fill=vertINRIA, draw, minimum width=3cm] (d_b) at ($ (nb) + (0,-2) $) {Zonal Demand B};
        \draw[<->, ultra thick] (a) -- (b) node[midway, above] {Export};
        \draw[ultra thick,->] (a.90) arc (0:215:4mm) node[midway, above] {Core Portion};
        \draw[ultra thick,->] (b.90) arc (0:215:4mm) node[midway, above] {Core Portion};
        \draw[ultra thick,->] (d_a) -- (a);
        \draw[ultra thick,->] (d_b) -- (b);
        \draw[ultra thick,->] (d_a) -- (b);
        \draw[ultra thick,->] (d_b) -- (a);

        \node[circle, thick, fill=mauveINRIA!20, draw, minimum size=0.5cm] (a1) at ($ (na) + (0,0.6) $) {P1};
        \node[circle, thick, fill=mauveINRIA!20, draw, minimum size=0.5cm] (a2) at ($ (na) + (-0.4,-0.6) $) {P2};
        \node[circle, thick, fill=mauveINRIA!20, draw, minimum size=0.5cm] (a3) at ($ (na) + (0.4,-0.6) $) {P3};

        \node[circle, thick, fill=jauneINRIA!20, draw, minimum size=0.5cm] (b1) at ($ (nb) + (0,0.6) $) {P4};
        \node[circle, thick, fill=jauneINRIA!20, draw, minimum size=0.5cm] (b2) at ($ (nb) + (0,-0.6) $) {P5};
    \end{tikzpicture}
    \caption{Representation of an ancillary market with two zones, with three producers in the first zone and two producers in the second zone. Constraints are highlighted with arrows.}
    \label{fig:market_diagram}
\end{figure}

\subsection{Market Clearing and KKT Reformulation}
We formulate the ancillary services market clearing problem  as a pessimistic multi-leader single follower Stackelberg game
\begin{subequations}
\begin{alignat}{3}
    &\forall n\in \mathcal{N}, \quad && \min_{x_n} \max_{y_n}&& \quad  J_n(x_n,y_n), \label{eq:MO:1} \\
    & && \text{s.t.} && x \in \argmin_{x \in \mathcal{X}(y)} \quad J_{\text{MO}}(x,y), \label{eq:MO:2}\\
    & && && y\in \prod_j\mathcal{Y}_j. \label{eq:MO:3}
\end{alignat}
\label{eq:full_problem}
\end{subequations}
With this formulation, each producer optimizes its bids without considering the different zones. Such a distinction is made by the market operator when it determines the fraction of bids accepted in each zone. Further, each producer's objective function is independent of the other producers' strategies. Coupling between the objective functions occurs through the market operator's strategy. In case of multiple solutions at the lower level, the pessimistic formulation enables the selection of the worst equilibrium for the producers, obtaining a more robust formulation of the market.
%If the $\min$ operator were to be removed in Eq.~\eqref{eq:MO:1}, multiple solutions might be considered at the lower level \eqref{eq:MO:2}-\eqref{eq:MO:3}.

The Lagrangian function associated with the market clearing defined in problem \eqref{eq:market_clearing} takes the form
%\vspace{-0.8cm}
\begin{equation}
    L(x,y;\Lambda) = \sum_{n\in\mathcal{N}}\sum_{k=1}^{ K_n}\left(\sum_{z\in\mathcal{Z}}x_{n,k}^z\right)\Delta_{n,k}\pi_{n,k} + \Lambda^\top\Phi(x,y),
    \label{eq:lagrangian}
\end{equation}
where $\Lambda \in \mathbb{R}_{+}^{3Z+B}$ refers to the dual variables associated with constraints~\eqref{eq:demand_c}-\eqref{eq:core_c}. 
%The analytical expression of the Lagrangian function in \eqref{eq:lagrangian} can be found in the the supplementary material \cite{morri_supplementary_2025}.
%Notice that the derivatives of the constraints associated to the multipliers $\sigma$ and $\delta$ have a non trivial behavior depending on the zone. 
\begin{lemma}[Slater's Constraint Qualification] \label{lem:slater}
    Slater's constraint qualification holds if one of these conditions is satisfied
    \begin{itemize}
        \item[(i)] $\displaystyle{\sum_{n\in \mathcal{N}(z)}}\bar{\Delta}_n > D_z,\;\forall z\in\mathcal{Z}$,
        \item[(ii)] $\displaystyle{\sum_{z'\neq z}}\left(\sum_{n\in \mathcal{N}(z')}\bar{\Delta}_n-D_{z'}\right) > D_z - \sum_{n\in \mathcal{N}(z)}\bar{\Delta}_n$ \\
        and \\$D_z - \sum_{n\in \mathcal{N}(z)}\bar{\Delta}_n\leq \sum_{z'\neq z}E_{z'}$.
    \end{itemize}
\end{lemma}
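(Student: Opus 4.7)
The plan is to exhibit, under either hypothesis, a strictly feasible point $x$ for the market operator's constraint set $\mathcal{X}(y)$, taking as reference the strategy profile in which every producer offers its full capacity, i.e., $\sum_{k}\Delta_{n,k}=\bar{\Delta}_n$. Since the inequalities \eqref{eq:demand_c}--\eqref{eq:core_c} together with the box constraints $0\le x_{n,k}^z\le 1$ are all affine in $x$, producing a strictly feasible $x$ directly yields Slater's qualification.

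Under condition (i), I would construct an ``autarky'' candidate: set $x_{n,k}^{z}=0$ for all $z\ne z(n)$, and $x_{n,k}^{z(n)}=\alpha_{z(n)}$ where
\[
\alpha_z := \frac{D_z+\varepsilon_z}{\sum_{n\in\mathcal{N}(z)}\bar{\Delta}_n}\in (0,1),
\]
for some $\varepsilon_z>0$ small enough that $\alpha_z<1$; this choice is available precisely because condition (i) yields $\sum_{n\in\mathcal{N}(z)}\bar{\Delta}_n > D_z$. A direct verification then shows that zonal demand \eqref{eq:demand_c} is strict by the choice of $\varepsilon_z$, the capacity limit \eqref{eq:accepted_c} is strict since $\alpha_z<1$, the export constraint \eqref{eq:export_c} holds with left-hand side equal to zero, and the core portion \eqref{eq:core_c} holds under the natural economic assumption $C_z\le D_z$.

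Under condition (ii), some zone $z_0$ has a deficit $\delta := D_{z_0} - \sum_{n\in\mathcal{N}(z_0)}\bar{\Delta}_n > 0$, and I would distribute nonnegative exports $\eta_{z'}$ from each $z'\ne z_0$ toward $z_0$ satisfying
\[
\sum_{z'\ne z_0}\eta_{z'}=\delta+\varepsilon,\quad \eta_{z'}< \sum_{n\in\mathcal{N}(z')}\bar{\Delta}_n-D_{z'},\quad \eta_{z'}\le E_{z'},
\]
for a sufficiently small slack $\varepsilon>0$; the existence of such $\eta_{z'}$ follows from the strict inequality in (ii) combined with $\delta \le \sum_{z'\ne z_0}E_{z'}$. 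Splitting each $\eta_{z'}$ proportionally across the producer-bid pairs of zone $z'$ defines the off-diagonal entries $x_{n,k}^{z_0}$, and the remaining $x_{n,k}^{z(n)}$ are chosen as in case (i) with $\alpha_{z(n)}<1$, which again gives strict zonal demand and strict capacity in $z_0$ as well as the donor zones.

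The main obstacle I anticipate is the bookkeeping in case (ii): allocating the exports $\eta_{z'}$ over the individual producer-bid pairs $(n,k)$ while preserving $\sum_z x_{n,k}^z<1$ and rendering each of \eqref{eq:demand_c}--\eqref{eq:core_c} strict. Degenerate cases where $E_z=0$ or $C_z=D_z$ force the corresponding affine inequality to hold with equality; since these constraints are affine, Slater's condition tolerates this and the construction still yields the desired qualification.
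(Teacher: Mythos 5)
Your proposal is correct and follows essentially the same route as the paper's (much terser) proof: exhibit a feasible interior point by letting each zone serve itself under condition (i), and by routing the deficit of the short zone through the other zones' leftover capacity, within the export caps, under condition (ii). Your write-up is more explicit than the paper's two-sentence sketch and usefully surfaces two points the paper leaves implicit: the need for $C_z \le D_z$ (more precisely, that the core portion be strictly attainable from local supply) for constraint \eqref{eq:core_c}, and the fact that since all constraints of \eqref{eq:market_clearing} are affine in $x$ for fixed $y$, the degenerate cases $E_z=0$ or $C_z=D_z$ do not invalidate the qualification.
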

\begin{proof}
Under condition (i), each zone can satisfy its own demand; thus, there exists an interior-point solution. Under condition (ii), one zone $z \in \mathcal{Z}$ may not be able to meet its own demand, i.e. $D_z - \sum_{n\in \mathcal{N}(z)}\bar{\Delta}_n>0.$

In this situation, we need enough demand offers left from other zones to cover the lack of supply, thus $$\sum_{z'\neq z}\left(\sum_{n\in \mathcal{N}(z')}\bar{\Delta}_n-D_{z'}\right) > D_z - \sum_{n\in \mathcal{N}(z)}\bar{\Delta}_n.$$ Since the export constraint should hold, we get that $D_z - \sum_{n\in \mathcal{N}(z)}\bar{\Delta}_n\leq \sum_{z'\neq z}E_{z'}.$
\end{proof}
%\vspace{-0.4cm}
When the lower-level problem \eqref{eq:market_clearing} satisfies Lemma~\ref{lem:slater}, we can replace it by its KKTs in \eqref{eq:full_problem}. Hence, each producer has to anticipate the decision $x$ of the market operator, meaning that in the single level the decision variables for producer $n$ will be $y_n \in \mathcal{Y}_n$ and $x_n \in \mathcal{X}_n(y)$ with $y=(y_n)_n$. %In addition, each producer also has the vector $\Lambda_n$ of dual decision variables coming from the market clearing reformulation.
Before writing the full reformulation, we introduce tracking variables for every zone $z\in \mathcal{Z}$
\begin{subequations}
    \begin{align}
        & d^z_n(x_{-n},y_{-n}) \defeq D_z - \sum_{j\neq n}\sum_{1\leq k\leq K_j}x_{j,k}^z\Delta_{j,k},\\
        & e^z_n(x_{-n},y_{-n}) \defeq E_z - \sum_{\substack{j\in \mathcal{N}(z),\\j\neq n}} \sum_{z'\neq z}\sum_{1\leq k\leq K_j} x_{j,k}^{z'}\Delta_{j,k},\\
        & c^z_n(x_{-n},y_{-n}) \defeq C_z - \sum_{\substack{j\in \mathcal{N}(z),\\j\neq n}} \sum_{1\leq k\leq K_j} x_{j,k}^z\Delta_{j,k}.
    \end{align}
\end{subequations}
%Because of the form of the objective functions and constraints, producer $n$'s optimization problem depends only on its own primal decision variables $y_n \in \mathcal{Y}_n$ and $x_n \in \mathcal{X}_n(y)$\pgt{ We should reformulate as $x_n$ is a decision variable of the market organizer. Maybe ``on the primal decision variables indexed by its number $n$'' or something like that?}. Let $\Lambda_n$ be the vector of dual decision variables coming from the market clearing reformulation of producer $n$. %$\lambda_n\in\mathbb{R}_{\geq0}^Z, \mu_n\in\mathbb{R}_{\geq0}^B, \sigma_n\in\mathbb{R}_{\geq0}, \delta_n\in\mathbb{R}_{\geq0}$.
%Notice that, given the dependence of the feasibility set $\mathcal{X}$ on $y$, the decision variable $x_n$ can be expressed as an implicit function in $y$, leading to $x_n \equiv x_n(y)$. We will not explicit this dependence everywhere in the single-level reformulation to keep a more compact formulation.
The single-level problem for agent $n$ takes the form
\begin{comment}
\begin{subequations}
\begin{alignat}{2}
    \min_{x_n}\,\max_{y_n,\Lambda_n}& J_n(&&y_n,x_n), \label{eq:single_level_obj}\\
    \mbox{s.t. }  & y_n\in&&\mathcal{Y}_n, x_n\in \mathcal{X}_n(y), \Lambda_n\geq0,\\
    & \nabla_{x_n}&&L(x_n,y_n;\Lambda_n) = 0,\label{eq:KKT_market}\\
    &\lambda_n^z\Big(&&\sum_{1\leq k\leq K_n}x_{n,k}^z\Delta_{n,k} - d_n^z(x_{-n},y_{-n})\Big) = 0,\notag\\
    & &&\forall z\in\mathcal{Z},\label{eq:slackness_demand}\\
    & \mu_{n,k}&&\Bigg(\sum_{z\in\mathcal{Z}}x_{n,k}^z - 1\Bigg) = 0,\; \forall k\in\{1,\dots,K_n\},\label{eq:slackness_accepted}\\
    & \sigma_n\Bigg(&&\sum_{k=1}^{K_n}\Big(\sum_{z'\neq z(n)} x_{n,k}^{z'}\Big)\Delta_{n,k}\notag\\
        & &&-e_n^{z(n)}(x_{-n},y_{-n})\Bigg) = 0,\label{eq:slackness_export}\\
    & \delta_n\Bigg(&&\sum_{k=1}^{K_n}x_{n,k}^{z(n)}\Delta_{n,k} - c_n^{z(n)}(x_{-n},y_{-n})\Bigg) = 0.\label{eq:slackness_core}
\end{alignat}
\label{eq:single_level_problem}
\end{subequations}
\end{comment}
\begin{subequations}
\begin{align}
    &\min_{x_n}\,\max_{y_n,\Lambda_n} J_n(y_n,x_n), \label{eq:single_level_obj}\\
    &\mbox{s.t. } y_n\in \mathcal{Y}_n, x_n\in \mathcal{X}_n(y), \Lambda_n\geq0,\\
    & \nabla_{x_n} L(x,y;\Lambda) = 0,\label{eq:KKT_market}\\
    &\lambda_n^z\Bigg(\sum_{1\leq k\leq K_n}x_{n,k}^z\Delta_{n,k} - d_n^z(x_{-n},y_{-n})\Bigg) = 0,\forall z\in\mathcal{Z},\label{eq:slackness_demand}\\
    & \mu_{n,k}\Bigg(\sum_{z\in\mathcal{Z}}x_{n,k}^z - 1\Bigg) = 0,\; \forall k\in\{1,\dots,K_n\},\label{eq:slackness_accepted}\\
    & \sigma_n\Bigg(\sum_{k=1}^{K_n}\Big(\sum_{z'\neq z(n)} x_{n,k}^{z'}\Big)\Delta_{n,k}-e_n^{z(n)}(x_{-n},y_{-n})\Bigg) = 0,\label{eq:slackness_export}\\
    & \delta_n\Bigg(\sum_{k=1}^{K_n}x_{n,k}^{z(n)}\Delta_{n,k} - c_n^{z(n)}(x_{-n},y_{-n})\Bigg) = 0.\label{eq:slackness_core}
\end{align}
\label{eq:single_level_problem}
\end{subequations}
Eq.~\eqref{eq:KKT_market} is the stationarity condition of the lower-level problem \eqref{eq:MO:2}-\eqref{eq:MO:3}. Though there is no explicit coupling between the producers' objective functions, Eqs.~\eqref{eq:demand_c}, \eqref{eq:export_c}, \eqref{eq:core_c} and the slackness conditions in Eqs.~\eqref{eq:slackness_demand}, \eqref{eq:slackness_export}, \eqref{eq:slackness_core} impose couplings through bids and zones. 

We now aim to highlight the structure of the problem. We first note that, given the dependence of the feasible set $\mathcal{X}$ on $y$, the decision variable $x_n$ can be expressed as an implicit function in $y$, $x_n(y)$. This allows to write the producer's objective function as an explicit function of all producers' decision variables, i.e., $J_n(y_n,y_{-n})$.

However, in this reformulation, the dual variables are decision variables of the producers. To simplify, we need an additional change of variable. We set $$w_n\defeq(\pi_{n,k}, \Delta_{n,k},x_{n,k}^z,\lambda_n^z,\mu_{n,k},\sigma_n,\delta_n) \in \mathcal{W}_n,$$ where $\mathcal{W}_n \defeq [\underline{\pi}_n,\bar{\pi}_n]^{K_n}\times[\underline{\Delta}_n,\bar{\Delta}_n]^{K_n}\times[0,1]^{K_n\cdot Z}\times[0,+\infty]^Z\times[0,+\infty]^{K_n}\times[0,+\infty]\times[0,+\infty]$ is a polyhedron of dimension $3K_n+Z(K_n+1)+2$.

The set of the $l_n$ local constraints  are represented by the function $h_n:\mathcal{W}_n\rightarrow \mathbb{R}^{l_n}$. Further, we introduce the local feasible set for each player $n$ 
$$\widetilde{\mathcal{W}}_n\defeq\{w_n\in \mathcal{W}_n\, | \, h_n(w_n)\leq 0\}.$$
Similarly, the $r$ shared constraints are formulated in a compact form as $G:\Omega(w) \rightarrow \mathbb{R}^r$, where $\Omega_n(w_{-n}):=\{w_n\in\widetilde{\mathcal{W}}_n\,|\,G(w_n,w_{-n})\leq 0\}$ and $\Omega(w) \defeq \prod_{n=1}^N\Omega_n(w_{-n})$. We check that $h_n$ and $G$ are continuously differentiable mappings. 

%Using the previously introduced compact notatio, we get $J_n(w_n,w_{-n})$.
%Notice that, given the dependence of the feasibility set $\mathcal{X}$ on $y$, the decision variable $x_n$ can be expressed as an implicit function in $y$, leading to $x_n \equiv x_n(y)$. Making explicit this dependence in our compact notation gives us $J_n(w_n,w_{-n})$.
%Considering once again the implicit dependence of $x_n$ on $y$, we can explicitly rewrite it in the objective function  $J_n(y_n,x_n(y)) \to J_n(y_n,y_{-n})$, and with out new compact form we can just write $J_n(w_n,w_{-n})$.
Relying on these notations, the $N$-producer coupled optimization problems, derived from Eqs.~\eqref{eq:single_level_problem}, give rise to a nonconvex game with side constraints \cite{pang_nonconvex_2011} 
$$\Gamma \defeq \Big(\mathcal{N},\Omega(\cdot),\left(J_n(\cdot)\right)_n\Big),$$ where producers solve the parametrized optimization problems
\begin{subequations}
    \begin{alignat}{2}
       \forall n \in \mathcal{N}, \quad & \max_{w_n\in \mathcal{W}_n} && J_n(w_n,w_{-n}),\\
        & \mbox{s.t. } && h_n(w_n) \leq 0,\\
        & && G(w_n,w_{-n})\leq 0.
    \end{alignat}
    \label{eq:GNG}
\end{subequations}
Pang and Scutari in \cite{pang_nonconvex_2011} study the properties of problem~\eqref{eq:GNG}, providing different sets of assumptions to be used in order to obtain existence guarantees for specific equilibria. In the next section, we rely on a result guaranteeing the existence of a Nash equilibrium for a relaxed version of the problem. Further, under mild assumptions on the model parameters, we show the equivalence between the set of solutions of the relaxed version and the generalized Nash game $\Gamma$. Refinements of generalized Nash equilibria (GNEs) solutions of $\Gamma$ requiring the uniform evaluation of the Lagrange multipliers associated with the coupling constraints across the producers, are called variational equilibria (v-GNEs) \cite{harker1991}. Interestingly, the generalized potential structure of $\Gamma$, guarantees the economic efficiency of v-GNEs. However, reaching v-GNEs requires some implicit coordination between the market participants which may be difficult to enforce. This will be discussed further in Section~\ref{sec:simul}.
%A solution of problem~\eqref{eq:GNG} has been introduced by Pang and Scutari in \cite{pang_nonconvex_2011}, as a quasi-Nash equilibrium (QNE). Formally, a QNE is defined as a solution of the variational inequality obtained by aggregating the first-order optimality conditions of the players' problems while retaining the convex constraints (if any) in the defining set of the variational inequality. Under a second-order sufficiency condition, a QNE becomes a local GNE of the game $\Gamma$ \cite{pang_nonconvex_2011}. Refinements of GNEs requiring uniform evaluation of the Lagrange multipliers associated with the coupling constraints across the producers, are called variational equilibria (v-GNEs) \cite{harker1991}. This will be discussed further in Section~\ref{sec:simul}.

\section{Theoretical Framework}
\label{sec:analysis}
%We present now two way to interpret the problem theoretically: first, we maintain each player separated, creating a Generalized Nash Game (GNG), then we show that we can obtain a Generalized Potential Game (GPG) if we consider all the players together. 
In a large part of the literature, the convexity of the agents' optimization problems is an essential assumption under which noncooperative games are studied and analyzed. On the contrary, in this paper, $\Gamma$ is a generalized Nash game with side constraints, involving trilinear objective functions and nonconvex constraints; thus, giving rise to a nonconvex game. In order to proceed, we will rely on results established in \cite{pang_nonconvex_2011}, where the authors study a class of problems with similar settings. To that purpose, we first apply the generic setting of \cite{pang_nonconvex_2011}. Then, we present a relaxed version of our problem, which is more tractable for this analysis, providing results regarding the equivalence with the non-penalized version. Further, we prove existence of market equilibrium under mild assumptions, and highlight the generalized potential structure of $\Gamma$.

\subsection{Nonconvex Game with Side Constraints}
The formulation of \cite{pang_nonconvex_2011} considers a problem where the coupling constraints are treated as penalties with associated prices, in a modified objective function. This problem is then solved as an $(N+1)$-player game, where $N$ players optimize their variables and an external one optimizes the prices, which are shared by all players. In our formulation, the coupling constraints are Eqs.~\eqref{eq:demand_c}-\eqref{eq:export_c}-\eqref{eq:core_c}-\eqref{eq:slackness_demand}-\eqref{eq:slackness_export}-\eqref{eq:slackness_core}, while the remaining part is considered as \textit{local constraints}.

With the compact notation previously introduced, we can write the problem as an $(N+1)$-player game with prices
\begin{equation}
    \forall n \in \mathcal{N}, \quad \max_{w_n\in \widetilde{\mathcal{W}}_n}\left[ J_n(w_n,w_{-n}) - \sum_{j=1}^r\rho_jG_j(w) \right],
    \label{eq:priced_game}
\end{equation}
where we call the externally optimized price, $\rho\in\mathbb{R}^r_+$. We can now recall the definition of Nash Equilibrium.
%that is a relevant solution concept to analyze this class of games.
\begin{definition}[Nash Equilibrium with Side Constraints, {\cite[Definition 1]{pang_nonconvex_2011}}]
\label{def:NE_priced}
    A Nash equilibrium (NE) for the game \eqref{eq:priced_game} is a strategy-price tuple $(w^\star,\rho^\star)$, such that for all $n$
    \begin{equation*}
        w_n^\star\in\argmax_{w_n\in \widetilde{\mathcal{W}}_n}\left[J_n(w_n,w_{-n}^\star) - \sum_{j=1}^r\rho_j^\star G_j(w_n,w_{-n}^\star)\right],
    \end{equation*}
    and
    \begin{equation*}
        0\leq\rho^\star \perp G(w^\star)\leq 0.
    \end{equation*}
\end{definition}

\subsection{Relaxed Generalized Nash Game}
\label{sec:penal_gng}
In order to extend the results from \cite{pang_nonconvex_2011}, we consider a relaxed version of our problem. We will use this relaxed version to prove the existence of an equilibrium under mild assumptions. Concretely, we add the constraints from Eqs.~\eqref{eq:demand_c} and \eqref{eq:core_c} to the objective function, which gives the function $J_n^p(w)$, the new constraint maps $h_n^p(w_n)$ and $G^p(w)$, and the associated set of penalties $M_n$ for each agent.
%(see the supplementary material \cite{morri_supplementary_2025}).
We can formulate the penalized generalized Nash game $\Gamma^p$ as follows:
\begin{subequations}
    \begin{alignat}{2}
        & \max_{w_n\in \mathcal{W}_n} && J_n^p(w)\\
        & \mbox{s.t. } && h_n^p(w_n) \leq 0\\
        & && G^p(w_n,w_{-n})\leq 0
    \end{alignat}
    \label{eq:penalized_GNG}
\end{subequations}
\begin{proposition}
\label{prop:equivalence}
    Assume $K_n\geq \frac{Z}{3},\,\forall n$. Then, there exists a set of penalties $(M_j)_{j\in\mathcal{N}}$ such that a solution of the penalized game \eqref{eq:penalized_GNG} is a solution of $\Gamma$.
\end{proposition}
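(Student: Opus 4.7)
The plan is to deploy a classical exact penalty argument adapted to the generalized Nash game setting. The penalized game $\Gamma^p$ differs from $\Gamma$ only in that the zonal demand constraints~\eqref{eq:demand_c} and the core portion constraints~\eqref{eq:core_c}, together with their associated slackness terms, are removed from each producer's feasible set and folded into the payoff with weight $M_n \geq 0$. The target is to choose $M_n$ large enough that any Nash equilibrium of $\Gamma^p$ must satisfy the dropped constraints with zero violation, which together with the remaining hard constraints (common to both games) and the identity $J_n^p = J_n$ on the feasible set of $\Gamma$ yields a GNE of $\Gamma$.

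As preparation, I would derive a uniform upper bound on producer $n$'s payoff. Since $\pi_{n,k} \leq \bar{\pi}_n$, $\Delta_{n,k} \leq \bar{\Delta}_n$ and $\sum_{z\in\mathcal{Z}} x_{n,k}^z \leq 1$ on $\mathcal{W}_n$, a direct bound gives $J_n(w) \leq K_n \bar{\pi}_n \bar{\Delta}_n$; call this upper bound $U_n$. In parallel, I would use the hypothesis $K_n \geq Z/3$ to construct, for any admissible opponents' profile $w_{-n}$, a strictly feasible response $\widetilde{w}_n$ in $\Gamma$: intuitively, each producer owns enough independent bids to split its capacity across the $Z$ zones and simultaneously contribute to its share of the demand, core and export constraints while keeping \eqref{eq:producer_const} strict. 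This guarantees that the zero-penalty level set of $J_n^p(\cdot,w_{-n})$ is nonempty and that $J_n^p(\widetilde{w}_n,w_{-n}) = J_n(\widetilde{w}_n,w_{-n}) \geq 0$.

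Next, I would carry out the comparison. Let $w^\star$ be a NE of $\Gamma^p$ and suppose, for contradiction, that at $w^\star$ producer $n$'s penalized constraints are violated by an aggregate positive amount $\eta > 0$. Then $J_n^p(w^\star) \leq U_n - M_n\,\eta$. Choosing $M_n > U_n/\eta_{\min}$, where $\eta_{\min} > 0$ is a uniform lower bound on the minimum infeasibility that can occur at any candidate optimizer, yields $J_n^p(w^\star) < 0 \leq J_n^p(\widetilde{w}_n, w^\star_{-n})$, contradicting the best-response optimality of $w^\star_n$. Consequently $w^\star$ is feasible in $\Gamma$; on this feasible set the penalty terms vanish, so $J_n^p(\cdot, w^\star_{-n})$ and $J_n(\cdot, w^\star_{-n})$ coincide on $\Omega_n(w^\star_{-n})$, which finally upgrades $w^\star$ into a GNE of $\Gamma$.

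The main obstacle I anticipate is the justification of a strictly positive lower bound $\eta_{\min}$ on the infeasibility at a putative optimizer, because the variables are continuous and one cannot exclude violations of arbitrarily small magnitude by purely algebraic means. The standard resolution is to combine a compactness-and-continuity argument on the constraint maps $h_n^p$ and $G^p$ over the compact polyhedron $\mathcal{W}_n$ with the observation that a candidate optimizer lying arbitrarily close to the feasible boundary can be replaced by a strictly feasible point, constructed as in the second step, without decreasing the penalized payoff since $J_n$ is continuous and the penalty contribution is nonnegative. Once this uniform-infeasibility step is settled, the explicit prescription $M_n > U_n/\eta_{\min}$ closes the argument.
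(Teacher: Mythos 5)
Your route is genuinely different from the paper's. The paper argues at the level of KKT systems: it takes a KKT point $(w_n^\ast,\Theta_n^\ast)$ of the unpenalized problem, substitutes it into the stationarity conditions of the penalized Lagrangian $\nabla\mathcal{L}^P(w_n,\Theta_n[\Sigma_n];M_n)=0$, and solves that system \emph{for the penalties} $M_n$, using $K_n\geq Z/3$ as the dimension condition that makes this system solvable; the penalties are thus pinned to specific values $\bar M_n$ rather than merely taken ``large enough.'' You instead attempt a classical exact-penalty threshold argument (bound the payoff by $U_n=K_n\bar\pi_n\bar\Delta_n$, exhibit a strictly feasible response, push $M_n$ above a threshold), and you use $K_n\geq Z/3$ only to build the Slater-type point. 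Both uses of the hypothesis are plausible, but they are doing different work.

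The gap in your argument is the one you flag yourself, and your proposed resolution does not close it. There is no uniform lower bound $\eta_{\min}>0$ on the aggregate violation at a candidate best response: the variables are continuous, so a maximizer of $J_n^p(\cdot,w_{-n}^\star)$ over $\mathcal{W}_n$ can violate the relaxed constraints by an arbitrarily small $\eta$, in which case the penalty deduction $M_n\eta$ is arbitrarily small and the comparison $U_n-M_n\eta<J_n^p(\widetilde w_n,w_{-n}^\star)$ fails for any fixed $M_n$. Your fix --- replace a nearly feasible candidate by a strictly feasible one ``without decreasing the penalized payoff since $J_n$ is continuous and the penalty contribution is nonnegative'' --- assumes exactly what must be proved: projecting onto the feasible set removes a penalty of size $M_n\eta$ but may cost up to $L\cdot\mathrm{dist}(w_n,\Omega_n(w_{-n}^\star))$ in $J_n$, and the net sign is favorable only if one has an error bound of the form $\mathrm{dist}(w_n,\Omega_n(w_{-n}^\star))\leq c\,\eta$ together with $M_n>Lc$. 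Establishing such an error bound (equivalently, a uniform bound on the Lagrange multipliers of the relaxed constraints across all candidate equilibria) is the substantive content of exact-penalty results, and it is nontrivial here because the relaxed constraints \eqref{eq:demand_c} and \eqref{eq:core_c} are bilinear in $(x,\Delta)$, so the feasible sets $\Omega_n(w_{-n})$ are nonconvex and Slater's condition alone does not deliver the multiplier bound as it would in the convex case. Until that step is supplied --- for instance via an EMFCQ-type qualification on $G^p$ over the compact set $\mathcal{W}$, or by reverting to the paper's device of solving the penalized stationarity system for $M_n$ directly --- the argument is incomplete.
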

\begin{proof}
    We call the two Lagrangian functions associated to the problems $\mathcal{L}_n^P$ and $\mathcal{L}_n^U$, and we consider the solution for $U$: $(w_n^*,\Theta_n^*)\in\mbox{SOL}\{\mbox{KKT of }U\}$, where $\Theta$ is the set of Lagrangian multipliers. This solution must respect all the constraints and be a solution of $\nabla\mathcal{L}^U(w^*_n,\Theta^*_n)=0$. We now call the part of Lagrangian multipliers that are still present in $P$ $\Theta[\Sigma_n]$, and we consider the set of equations coming from $\nabla\mathcal{L}^P(w_n,\Theta_n[\Sigma_n]; M_n)=0$, where $M_n$ is the set of penalties for agent $n$. If we evaluate these equations in $(w_n^*, \Theta_n^*[\Sigma_n])$, we can solve them w.r.t. the penalties $M_n$, obtaining $\Bar{M}_n$. Given the assumptions, solving the system $\nabla\mathcal{L}^P(w_n,\Theta_n[\Sigma];\Bar{M}_n)=0$ in $(w_n,\Theta_n[\Sigma])$ provides $\tilde{w}_n$ which is a solution also for $\nabla\mathcal{L}^U(w_n,\Theta_n)=0$.
\end{proof}
%Proposition~\ref{prop:equivalence} results from the comparisons of the sets of KKTs of the generalized Nash game $\Gamma$ and the penalized game \eqref{eq:penalized_GNG} denoted $\Gamma^p$. Details can be found in the supplementary material \cite{morri_supplementary_2025}.

\subsection{Existence of a Generalized Nash Equilibrium for $\Gamma$}
\label{sec:existence_NE}
To obtain existence result for the relaxed problem, we have to verify assumptions on the variable domains, the objective functions and the constraint sets.
\begin{assumptions}%[{\cite[Assumptions A',B,C']{pang_nonconvex_2011}}]
\label{ass:existence}
We assume the following: 
    \begin{itemize}
        \item[(a)] $\mathcal{W}_n$ are polyhedral and compact for all $n$;
        \item[(b)] $J_n^p(\cdot,w_{-n})$ is continuously differentiable in $w_n$, $h_n^p(\cdot)$ and $G^p(\cdot,w_{-n})$ are twice continuously differentiable in $w_n$ for all $n$;
        \item[(c)] There exists $w^\star \in \mathcal{W}:=\prod_{n\in\mathcal{N}}\mathcal{W}_n$ such that $h_n^p(w_n^\star)<0,\,\forall n$ and $G^p(w^\star)<0$;
        \item[(d)] The Hessians matrices $\nabla^2 h_n^p(w_n)$ for the $l_n$ local constraints are copositive for all $n$; %on the tangent cone $\mathcal{T}(\mathcal{W}_n,w_n^\sharp)$.
        \item[(e)] The Hessians matrices $\nabla^2 G^p(w)$ for the $r$ global constraints are copositive. %on the tangent cone $\mathcal{T}(\Xi,w^\sharp)$. %In particular in \cite{pang_nonconvex_2011} the co-positivity is required on specific tangent cones, but due to our relaxed formulation we obtain constant Hessians, which respect the assumptions without needing to be evaluated on specific points.
        %\item the set $\{w \in \Xi \, | \, (w-w^\sharp)^\top \mathrm{col}((\nabla_{w_n}J_n(w_n,w_{-n}))_n) > 0\}$ is bounded (possibly empty).
    \end{itemize}
\end{assumptions}
\begin{theorem}[Existence of a solution for $\Gamma^p$]
\label{th:existence}
    Define $\mathcal{W}^\star\subseteq \mathcal{W}$ such that for any $w\in\mathcal{W}^\star$, the Hessian matrix of the objective function is positive semi-definite. Then, there exists a generalized Nash equilibrium on $\mathcal{W}^\star$ solution of $\Gamma$.
\end{theorem}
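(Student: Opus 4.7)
The plan is to reduce existence for $\Gamma$ to existence for the relaxed game $\Gamma^p$, and then apply an existence theorem for nonconvex games with side constraints from \cite{pang_nonconvex_2011} to $\Gamma^p$ restricted to $\mathcal{W}^\star$. The reduction is immediate: by Proposition~\ref{prop:equivalence}, under the mild condition $K_n\geq Z/3$, any solution of the penalized game $\Gamma^p$ (for the specific choice of penalties $\bar{M}_n$ constructed in the proof of that proposition) is also a solution of $\Gamma$. It therefore suffices to prove existence of a GNE for $\Gamma^p$ on $\mathcal{W}^\star$.

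For the second step, I would proceed by verifying, one by one, the hypotheses required by the existence result in \cite{pang_nonconvex_2011}. Compactness and polyhedrality of $\mathcal{W}_n$ come from Assumption~\ref{ass:existence}(a); the restriction $\mathcal{W}^\star\cap\mathcal{W}_n$ inherits compactness as a closed subset. Smoothness of $J_n^p$, $h_n^p$ and $G^p$ is Assumption~\ref{ass:existence}(b). The Slater-type condition (c) guarantees the constraint qualification needed to write down the KKT system for each player's parametrized subproblem. Restriction to $\mathcal{W}^\star$ ensures that $\nabla^2_{w_n} J_n^p$ is positive semi-definite, so that each player's objective becomes convex in their own decision variable, while the copositivity hypotheses (d)-(e) play the role of generalized convexity for the constraints on the nonnegative orthant spanned by the capacity, price and multiplier components of $\mathcal{W}_n$.

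I would then apply the Kakutani–Fan–Glicksberg fixed-point theorem to the best-response correspondence $w \mapsto \mathrm{BR}(w)$ on the compact convex set $\mathcal{W}^\star\cap\mathcal{W}$. Nonemptiness of $\mathrm{BR}(w)$ follows from compactness together with Weierstrass; convex-valuedness follows from convexity of each player's restricted subproblem on $\mathcal{W}^\star$; upper hemicontinuity follows from Berge's maximum theorem, using the smoothness in (b) and the constraint qualification in (c). A fixed point of $\mathrm{BR}$ is a GNE of $\Gamma^p$, which by Proposition~\ref{prop:equivalence} lifts to a GNE of $\Gamma$ on $\mathcal{W}^\star$.

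The main obstacle will be to argue carefully that the copositivity assumptions (d)-(e), which only assert $v^\top \nabla^2 h_n^p\, v\geq 0$ and $v^\top \nabla^2 G^p\, v\geq 0$ on the nonnegative orthant, suffice to guarantee convex-valued best responses on $\mathcal{W}^\star$. The reason this is expected to go through is that the components of $w_n$ are nonnegative on $\mathcal{W}_n$, so feasible perturbations lie in the cone on which copositivity applies; still, making this precise, and checking that no spurious nonconvexity is introduced by the coupling between the primal variables $(\pi_{n,k},\Delta_{n,k},x_{n,k}^z)$ and the dual variables $(\lambda_n^z,\mu_{n,k},\sigma_n,\delta_n)$ in the side constraints, is the delicate point of the argument.
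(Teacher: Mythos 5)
Your overall architecture matches the paper's: check the hypotheses of the Pang--Scutari framework for the penalized game $\Gamma^p$, invoke their existence result for the $(N+1)$-player priced game, and transfer the solution back to $\Gamma$ via Proposition~\ref{prop:equivalence}. However, there are two genuine gaps in the execution.

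First, you treat Assumptions~\ref{ass:existence}(a)--(e) as hypotheses to be cited rather than as claims to be \emph{verified} for this specific game, and the verification is where essentially all the work in the paper's proof lives. In particular, copositivity of the constraint Hessians (d)--(e) is not an assumption one gets for free: the paper computes $\nabla^2 h_n^p$ explicitly and finds it vanishes for every local constraint except the stationarity condition~\eqref{eq:KKT_market}, for which the quadratic form evaluates to $\pi_{n,k}-\lambda_n^z-\delta_n$ when $z=z(n)$ and $\pi_{n,k}-\lambda_n^z+\sigma_n$ otherwise; nonnegativity then follows because the interior point of (c) forces the multipliers to zero and prices are nonnegative by~\eqref{eq:price_bound}. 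The export constraint~\eqref{eq:export_c} is handled separately via its constant positive Hessian. You correctly identify the primal--dual coupling in the side constraints as ``the delicate point,'' but you frame it as a question of whether copositivity \emph{suffices} (that is what the cited theorem settles), when the actual issue is whether copositivity \emph{holds} --- and answering that requires the Hessian computation you do not carry out.

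Second, your proposed fixed-point argument via Kakutani--Fan--Glicksberg and Berge on the best-response correspondence over $\mathcal{W}^\star\cap\mathcal{W}$ would not go through as stated. That route needs convex strategy sets and convex-valued best responses, but $\mathcal{W}^\star$ (the locus where the objective Hessian is positive semi-definite) is not convex in general, and the players' feasible sets $\Omega_n(w_{-n})$ are cut out by bilinear and trilinear constraints, so they are not convex either. This is precisely why the paper does not run a direct Kakutani argument and instead appeals to \cite[Theorem~13]{pang_nonconvex_2011}, which is engineered for nonconvex games with side constraints under the copositivity and interior-point conditions above, yielding a Nash equilibrium $(w^\star,\rho^\star)$ of the priced game that is then recognized as a GNE of $\Gamma^p$. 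If you want to keep a self-contained fixed-point argument, you would have to reproduce the machinery of that theorem rather than the textbook convex-game argument.
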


\begin{proof}
Assumption~\ref{ass:existence} (a) holds, since $\mathcal{W}_n$ is defined as a polyhedral and compact set, i.e., we can express any condition of the form $x\in[x_a,x_b]$ as a set of inequalities $x\leq x_b,x\geq x_a$, which are defined as intersections between closed half spaces. Further, the objective function is continuous and differentiable everywhere in the domain $\mathcal{W}^\star$, and the constraints functions are  bilinear or trilinear. Thus, they are twice continuous and differentiable over the whole domain $\mathcal{W}^\star$. Assumption~\ref{ass:existence} (b) is then satisfied. Assumption~\ref{ass:existence} (c) requires the existence of an interior point. Considering $w^\star$ that strictly satisfies Eq.~\eqref{eq:demand_c}, \eqref{eq:accepted_c}, \eqref{eq:export_c} and \eqref{eq:core_c}, we immediately obtain that $\lambda_n^z = 0, \mu_{n,k} = 0, \sigma_n = 0, \delta_n = 0,\,\forall k\in\{1,\hdots,K_n\},\,\forall n\in\mathcal{N},\,\forall z\in\mathcal{Z}$. Eq.~\eqref{eq:KKT_market} needs instead to be relaxed to a strict inequality. Finally, to check for Assumptions~\ref{ass:existence} (d)-(e), we need to compute the Hessians matrices associated to the constraints functions. Regarding the local set of constraints $h_n^p(w_n)$, we obtain $\nabla^2 h_n^p(w_n) = 0,\, \forall n\in\mathcal{N},\,\forall w_n\in\mathcal{W}_n$, except for Eq.~\eqref{eq:KKT_market}, for which we obtain the following cases:
\begin{equation*}
x_n^{\top}\nabla_{x_n}^2\left(\nabla_{x_{n,k}^z} L\right)x_n = 
    \begin{cases}
        \pi_{n,k}-\lambda_n^z-\delta_n \geq 0, \quad \mbox{if } z = z(n)\\
        \pi_{n,k}-\lambda_n^z+\sigma_n \geq 0, \quad \mbox{if } z \neq z(n)
    \end{cases}
\end{equation*}
Given that an interior point forces $\lambda_n^z=0,\delta_n=0$ and $\sigma_n=0$, and that the prices $\pi_{n,k}$ are positive by definition, both inequalities are always satisfied.
%(we leave the full form in the supplementary material \cite{morri_supplementary_2025}) is easily obtained and it satisfies the copositivity requirement.
Considering the only remaining global constraint, Eq~\eqref{eq:export_c}, we obtain a constant positive Hessian 
%(full computation in the supplementary material \cite{morri_supplementary_2025})
, which directly satisfies the copositivity assumption. We then apply \cite[Theorem 13]{pang_nonconvex_2011} to the $(N+1)-$players game to obtain the existence of a Nash equilibrium $(w^\star,\rho^\star)$, and from \cite[Section 2.1]{pang_nonconvex_2011} we can conclude that $w^\star$ is a generalized Nash equilibrium for $\Gamma^p$. Finally, using Proposition~\ref{prop:equivalence}, a solution of $\Gamma^p$ is also a solution for $\Gamma$, which concludes the proof. 
\end{proof}
% \begin{corollary}
%   The generalized Nash game $\Gamma$ defined in (13) admits a solution.
% \end{corollary}
% \begin{proof}
%     Using results from \cite[Sec. 2.1]{pang_nonconvex_2011}, we have an equivalence between the Nash equilibrium of the $(N+1)-$players game and the associated generalized Nash game, which in our case is the penalized version $\Gamma^p$. We conclude using Proposition~\ref{prop:equivalence}, which states the equivalence between $\Gamma^p$ and $\Gamma$.
% \end{proof}
We conclude this section checking the potential structure of the game:
\begin{proposition}
\label{prop:GPG}
    The generalized Nash game described in Eq.~\eqref{eq:GNG} is a generalized potential game (GPG).
\end{proposition}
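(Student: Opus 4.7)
The plan is to exhibit an exact (and therefore generalized) potential function for $\Gamma$ and check the defining sign condition directly. The natural candidate is
$$P(w) \defeq \sum_{n\in\mathcal{N}} J_n(w_n) = \sum_{n\in\mathcal{N}}\sum_{k=1}^{K_n}\pi_{n,k}\Delta_{n,k}\left(\sum_{z\in\mathcal{Z}} x_{n,k}^z\right),$$
which, up to sign, coincides with the market operator's cost $J_{\text{MO}}$ in Eq.~\eqref{eq:J_mo}. I would first recall the working definition (Facchinei-Kanzow / Facchinei-Piccialli-Sciandrone): $\Gamma$ is a GPG if there exists $P:\mathcal{W}\to\mathbb{R}$ such that for every player $n$, every $w_{-n}$ with $\Omega_n(w_{-n})\neq\emptyset$, and every pair $w_n^1,w_n^2\in\Omega_n(w_{-n})$, the sign of $J_n(w_n^1,w_{-n})-J_n(w_n^2,w_{-n})$ equals the sign of $P(w_n^1,w_{-n})-P(w_n^2,w_{-n})$.

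Then I would exploit the key structural fact, already visible from Eq.~\eqref{eq:J_n}, that $J_n$ depends only on producer $n$'s own decision variables: the prices $\pi_{n,k}$, the capacities $\Delta_{n,k}$ and the zonal fractions $x_{n,k}^z$ are all components of $w_n$, and none of $w_{-n}$ enters $J_n$. Consequently, in any unilateral deviation from $w_n^2$ to $w_n^1$, all summands indexed by $m\neq n$ in $P$ are held fixed and cancel, leaving
$$P(w_n^1,w_{-n}) - P(w_n^2,w_{-n}) = J_n(w_n^1) - J_n(w_n^2).$$
Not only do the signs agree, but the differences themselves coincide, so $P$ is in fact an exact Monderer-Shapley potential — which is strictly stronger than the GPG property being claimed. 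I would then conclude by remarking that the player coupling in $\Gamma$ is carried entirely by the shared constraint mapping $G$ defining $\Omega_n(w_{-n})$, not by the objectives, which is exactly why a separable potential function is sufficient and why $\Gamma$ qualifies as a GPG.

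I do not anticipate any real obstacle; the argument is essentially an observation once the potential is guessed. The only point worth double-checking is that the KKT-based reformulation that augments $w_n$ with the dual variables $\lambda_n^z, \mu_{n,k}, \sigma_n, \delta_n$ does not secretly introduce $w_{-n}$ into $J_n$. These multipliers are indexed by producer $n$ and enter only the local stationarity condition \eqref{eq:KKT_market} and the slackness constraints \eqref{eq:slackness_demand}--\eqref{eq:slackness_core}, i.e., they reshape the feasible set $\Omega_n(w_{-n})$ but leave the revenue functional \eqref{eq:J_n} intact. Hence the separable structure of $P$ is preserved under the reformulation and the proposition follows.
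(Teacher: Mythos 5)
Your proof is correct and follows essentially the same route as the paper: both exhibit $P(w)=\sum_{n}J_n$ as the potential and invoke the GPG definition of Facchinei et al., though you go further by verifying that the separability of $J_n$ in $w_n$ (objectives uncoupled, all coupling pushed into the shared constraints) makes $P$ an exact Monderer--Shapley potential. The only point the paper states that you leave implicit is the first requirement of that definition --- nonemptiness of the common feasible set, i.e., enough aggregate supply to cover the demand --- which you absorb into the quantifier ``$\Omega_n(w_{-n})\neq\emptyset$'' rather than checking it holds for the game at hand.
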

\begin{proof}
    Using \cite[Def. 2.1]{facchinei_decomposition_2011},  two requirements need to be satisfied for a GNG to be a GPG. The first requirement amounts to have a feasible set that is nonempty, which in our setting implies having enough energy offer to cover the total demand.
    %In particular, the feasible set, \textcolor{blue}{which we call $\mathcal C$, does not have to be the Cartesian product of the players' feasible sets, meaning that $\mathcal C \subseteq \prod_{n=1}^N \Omega_n(w_{-n})$.}
    The second requirement is for a potential $P(w)$ function to exist, which in our case is satisfied by using $P(w):=\sum_{n=1}^N J_n(w_n,w_{-n})$.
\end{proof}
From Proposition~\ref{prop:GPG}, we can rewrite Eqs.~\eqref{eq:GNG} as
\begin{subequations}
    \label{eq:GPG}
    \begin{align}
    \max_{w\in W} & \quad \sum_{m=1}^N J_n(w_m,w_{-m}),\\
    \text{s.t.} & \quad h_n(w_n) \leq 0, \forall n,\\
    & \quad G(w_n,w_{-n})\leq 0,
\end{align}
\end{subequations}
which can be solved using a standard nonlinear programming solver. The potential structure is interesting algorithmically, to guarantee the convergence of best-response dynamics.

\vspace{-0.3cm}

\section{Algorithmic Solutions}
\label{sec:algorithms}
We propose two exact approaches to compute GNEs of $\Gamma$, that we compare against MARL simulation which capture in a realistic way the interactions between the producers and the market dynamics. One of the exact approach makes use of the GPG structure of $\Gamma$; thus leading to compute a solution of \eqref{eq:GPG} in an integrated way, relying on a nonlinear mathematical optimization solver (MINOS) which is based on the projected Lagrangian method \cite{noauthor_minos_nodate}. We describe next the other exact approach.

\subsection{Gauss-Seidel Best-Response}
\label{sec:gsbr}
We follow the implementation of the Gauss-Seidel best-response algorithm from \cite{facchinei_decomposition_2011}. We solve the problem in Eq.~\eqref{eq:GNG} for each agent sequentially (using MINOS to compute best-response for each player); stopping the whole process once the stopping criterion is met, i.e., the distance between the players' strategy profiles at two consecutive iterations do not change more than a small constant $\varepsilon$. This is summarized in Alg.~\ref{alg:gauss_seidel}.
\begin{algorithm}[tb]
\caption{Gauss-Seidel Best Response.}
\label{alg:gauss_seidel}
\begin{algorithmic}[1] %[1] enables line numbers
\State Initialize problem with feasible point $w^0=[w_1^0,\dots,w_N^0]$, $k=0$, $\varepsilon>0$
\While{$||w^k-w^{k-1}||>\varepsilon$}
\For{$n\in \{1,\dots,N\}$}
\State Solve regularized Eq.~\eqref{eq:GNG} for agent $n$ with $[w_1^{k+1},\dots,\allowbreak w_n^k,\dots,\allowbreak w_N^{k+1}]$
\EndFor
\State Update regularization parameter
\State $k\leftarrow k+1$
\EndWhile
\end{algorithmic}
\end{algorithm}

From \cite[Theorem 5.2]{facchinei_decomposition_2011} and the GPG structure of the game, the Gauss-Seidel best-response algorithm is proved to converge in finite time to a GNE of $\Gamma$.%, provided it exists, which we already proved in Section~\ref{sec:existence_NE}.

%These two approaches allow us to compare fully coordinated and uncoordinated interaction schemes between agents, that we will compare against the multi-agent simulation based on reinforcement learning.

\subsection{Multi-Agent Deep Reinforcement Learning}
\label{sec:marl}
The core idea of reinforcement learning is to have an agent learning directly from exposure to an unknown environment. It is formally modeled as a Markov Decision Process (MDP) $\{S,A,r\}$ where $S$ is the state space, $A$ the action space and $r:S\times A\to\mathbb{R}$ the reward function. We define the discounted return over a trajectory of length $L$ as $R(s,a) := \sum_{t=0}^{t=L-1}\gamma^tr(s_t,a_t)$, where $0 < \gamma < 1$ is the discount factor. An agent interacts with the environment sampling action from its policy $a\sim\mu(s)$, and it aims to maximize the reward, so the goal is to find the policy that maximizes the discounted return over an infinite horizon. In particular, defining the $Q$-function as:
\begin{equation*}
    Q_{\mu}(s,a) = \mathbb{E}[R(s,a)|\mu,s_0=s,a_0=a],
\end{equation*}
we want to solve the following optimization problem to obtain the optimal policy $\pi^\star$
\begin{equation*}
    Q^\star(s,a) = Q_{\mu^\star}(s,a) = \max_{\mu} Q_{\mu}(s,a).
\end{equation*}
Generally, we do not have access to analytical forms for neither $Q$ nor $\mu$, making the optimization extremely complex to solve. RL algorithms, and in particular \textit{deep} RL (DRL) algorithms, try to solve this problem learning a policy that approximates the optimal one. DRL employs neural networks to model both the policy $\mu$ and the $Q$-function, so that the resulting algorithm is model free and can be used for continuous or discrete states and actions. We will not explore in depth here the differences between DRL algorithms, but we will provide a quick overview of the one we use.

We implemented the multi-agent deep reinforcement learning algorithm described in \cite{lowe_multi-agent_2017}: this is called an \textit{actor-critic} algorithm, where an actor learns the policy and a critic the $Q$-function. The actor and critic are neural networks parameterized by $\theta$ and $\omega$ respectively, which are trained over specific losses. In particular the loss function for the critic is based on the Bellman's equation:
\begin{equation*}
    \mathcal{L}_{Q} = \mathbb{E}_{(s,a,r,s')\sim\mathcal{D}}\left[ (Q_{\omega}(s,a)-(r+\gamma Q_{\omega}(s',\mu_{\theta}(s'))))^2 \right],
\end{equation*}
where $\mathcal{D}$ is a buffer where we save previously encountered states, the actions that the agent used and the reward it obtained.
The actor instead minimizes the following loss:
\begin{equation*}
    \mathcal{L}_{\mu} = -\mathbb{E}_{s\sim\mathcal{D}}\left[Q_{\omega}(s,\mu_{\theta}(s))\right]
\end{equation*}
The intuition behind this algorithm is that the critic learns to map future rewards to specific states and actions (which is the $Q$-function), while the actor tries to maximize the $Q$-function itself (represented by the critic output). In the multi-agent settings we apply a modified version of the single agent implementation presented here, where each agent has its own actor and critic, but the critics receive as input the actions \textit{of all the agents}. This was introduced in \cite{lowe_multi-agent_2017} as a way of stabilizing the learning and improving the results. In particular, given the zonal nature of our environment, we share the actions between the agents of the same zone, but not between all agents.

In our environment, the action $a$ of each producer corresponds to its bids (prices and capacities), while the state $s$ contains the previous day clearing prices, information about the current day and a random zonal perturbation. The reward is a function of the profit and clearing price. In order to run the simulation, each producer outputs their bids, which are used to clear the market solving problem~\eqref{eq:market_clearing}. The results are then used to compute the reward for each producer, which in turn is used to update their strategy.
%We leave further details in the supplementary material \cite{morri_supplementary_2025}.
In Alg.~\ref{alg:zMADDPG} we report in pseudo-code the structure for the training of the RL agents.
\begin{algorithm}[h]
\caption{Basic structure of the zonal MARL algorithm.}
\label{alg:zMADDPG}
\begin{algorithmic}[1] %[1] enables line numbers
\State Initialize problem state $s_0$
\For{$t\in L$}
\State Get action $a_n^t\leftarrow\mu_n(s_t), \forall n \in \mathcal{N}$ 
\State Clear market and obtain $(r^t_n,s_{t+1})\;\forall n\in\mathcal{N}$
\State Build zonal action: $a(z)^t=[a_n^t\;\forall n\in\mathcal{N}(z)]$
\State Store $(s_t,a(z)^t,r^t_n,s^{t+1})$ for each agent
\For{$z\in\mathcal{Z}$}
\For{$n\in\mathcal{N}(z)$}
\State Sample $(s,a(z),r_n,s')$ from $\mathcal{D}$
\State Train critic $Q_n$ with $\mathcal{L}_Q(s,a(z),r_n,s')$
\State Train actor $\mu_n$ with $\mathcal{L}_{\mu}(s,Q_n)$
\EndFor
\EndFor
\EndFor
\end{algorithmic}
\end{algorithm}

\section{Simulation Results}
\label{sec:simul}
We now discuss the numerical results of the three algorithmic solutions we propose. We test these algorithms on real data from the German and Austrian ancillary services markets, for the period between January 2024 and August 2024 \cite{noauthor_accueil_nodate}. The maximum length for the bid sequence is 5, and each submitted capacity must be $\geq 5$ MW. As shown in Tab.~\ref{tab:demands}, the demand in Germany is much higher and volatile than that in Austria. Furthermore, the export limit is fixed at $80$ MW for both zones and the core portion is $100$ MW for Austria and $0$ MW for Germany. The symmetric and low export limit suggests that producers in Germany could have some impact in Austria, while the converse situation is less likely to arise.
\begin{table}[h]
\caption{Demand data summary.}
    \centering
    \begin{tabular}{lcccc}
    \toprule
    Zone & Min & $25$th perc. & $75$th perc. & Max \\
    \midrule
    Germany & $1745$ MW & $1898$ MW & $1988$ MW & $2103$ MW\\
     Austria & $200$ MW & $200$ MW  & $200$ MW & $225$ MW \\
     \bottomrule
    \end{tabular}
    \label{tab:demands}
\end{table}
In our settings we consider 8 producers, with the parameters described in Tab.~\ref{tab:producers}. The parameters are chosen in order to have a competitive environment, making it more difficult to have one or two producers dominating the whole market. We also observed from the real bids submitted, that the total offered capacity should be close to double the demand; our parameters allow for this to happen.
\begin{table}[h]
\caption{Producers' parameters.}
\centering
\begin{tabular}{cccc}
\toprule
Producer & Zone & $\bar{\Delta}$ (MW) & $\underline{\pi}$ \\ \midrule
0 & 0 & 700 & 7 \\
1 & 0 & 700 & 7 \\
2 & 1 & 150 & 3 \\
3 & 1 & 150 & 3 \\
4 & 0 & 650 & 6 \\
5 & 0 & 600 & 5 \\
6 & 0 & 850 & 8 \\
7 & 1 & 350 & 4 \\ \bottomrule
\end{tabular}
\label{tab:producers}
\end{table}
Finally, we test multiple values for the export constraints, in order to study the effect that the two zones have on each other.
All simulations were done on a MacBook Pro with an M2 Pro chip and 16G of memory (code available at \url{https://github.com/FrancescoMorri/MARL_ancillary_markets}).
%\footnote{Code: \url{https://github.com/FrancescoMorri/MARL_ancillary_markets}}.

\subsection{Convergence of the Multi-Agent RL Simulation}
\begin{figure*}[t]
    \centering
    \includegraphics[width=0.8\linewidth]{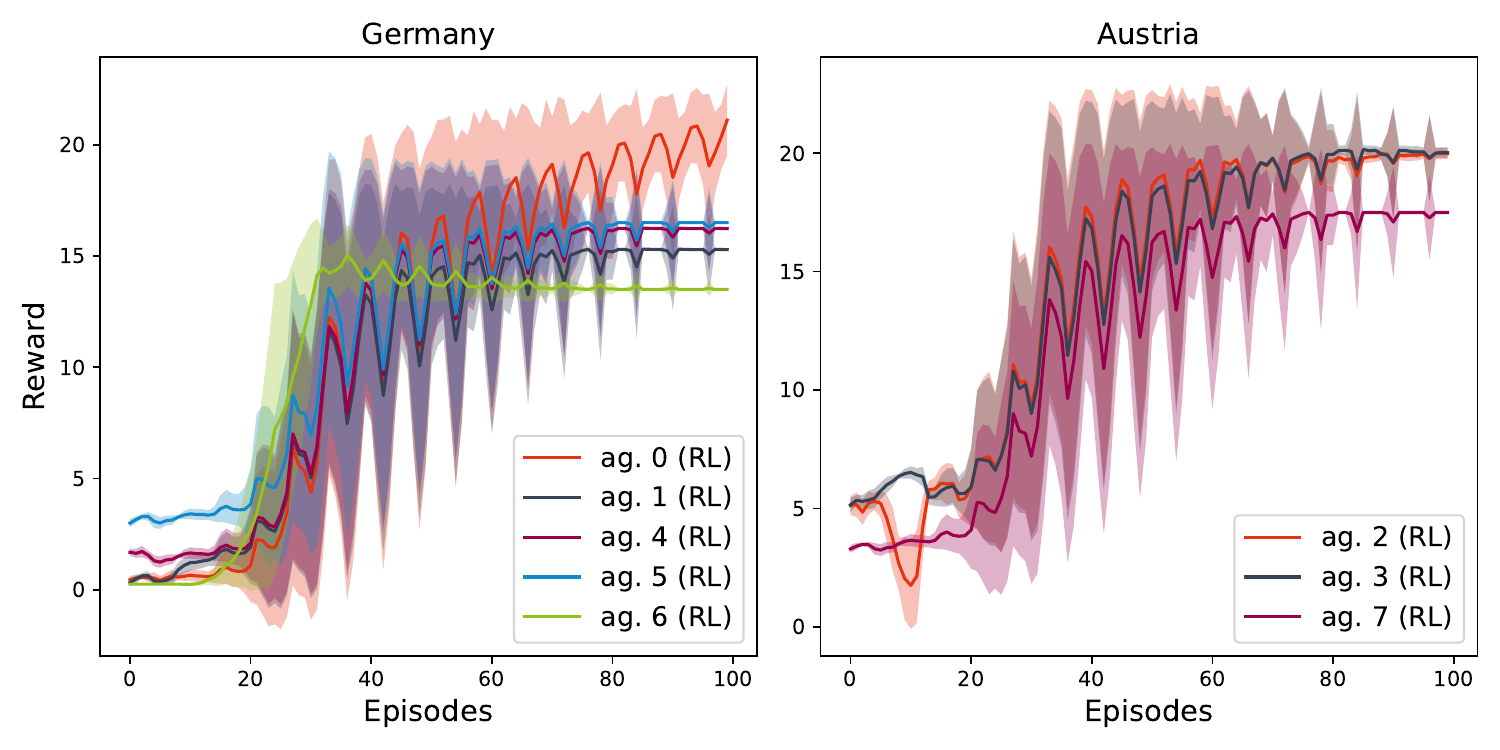}
    \caption{Episodic reward for each RL based agent throughout the training. Each episode represents passing through the full market data. We plot the rewards divided by zone, each point represent the average reward during an episode and the shaded region the standard deviation.}
    \label{fig:8MARL_reward}
\end{figure*}
We start by analyzing the convergence and stability of the multi-agent simulation. Our settings are particularly complicated, since the market environment is dependent on all agents decisions. We tested two settings for MARL algorithm: one with 4 learning based producers, and one with 8. In the first case, the 4 other producers always bid their marginal price. The second case creates a very dynamic environment, since at every step the price of the market is decided based on the actions of 8 independent RL agents. To assess the stability and convergence, we consider the reward of each agent throughout the training. More precisely, we compute the average reward during an episode, which refers to the amount of steps needed to go over the full dataset \cite{noauthor_accueil_nodate}. With this data we obtain Fig.~\ref{fig:8MARL_reward}, where we plot the episodic average reward for each agent divided by zone, considering the case with only RL based producers. Notice that the reward is not the profit, as we mentioned in Section~\ref{sec:marl}. From Fig.~\ref{fig:8MARL_reward} we can see that each agent increases its reward during the initial episodes and eventually plateaus to a stable value. These results are particularly interesting, since they show how the algorithm is well suited for a medium scale simulation with competitive learning agents, providing results that we can compare with the classical solutions.

\subsection{MARL Solution and Generalized Nash Equilibria}
We first discuss the theoretical difference between the integrated solution and the Gauss-Seidel best-response solution.
The two approaches may yield different equilibria: the integrated potential formulation leads to a v-GNE, whereas the best-response produces a GNE that may not be a v-GNE\footnote{Indeed, the set of v-GNEs solutions of $\Gamma$ is included in the set of GNE solutions of $\Gamma$.}. In case a v-GNE is reached, the market allocates the demand in Eq.~\eqref{eq:demand_c} through a single price system, therefore leading to the determination of one price $\lambda$ for the constraint, obtaining a pay-as-clear market. On the contrary, in case there is no market to determine the price system associated with the coupling constraints, two producers $n$, $m$ might attribute different evaluations to the same constraint. This can lead to different valuations of the dual variables, in particular $\lambda_n \neq \lambda_m$, creating a virtual pay-as-bid market.
\begin{comment}
  Starting from the GPG problem, we consider the role of the stationarity condition coming from Eq.~\eqref{eq:lagrangian}, which for player $n$, bid $k$ and zone $z$ is the following
\begin{equation}
\begin{split}
    &\nabla_{x_{n,k}^z}L(x;\lambda,\mu,\sigma,\delta) = \Delta_{n,k}\pi_{n,k}-\lambda_z\Delta_{n,k}+\mu_{n,k}+\\
    & + (1-\mathds{1}_{z = z(n)})\sigma_{z(n)}\Delta_{n,k} -\mathds{1}_{z = z(n)}\delta_{z(n)}\Delta_{n,k}.
    \label{eq:stationarity_condition}
\end{split}
\end{equation}
Notice that every $\mu$ must always be $0$, since all other terms are multiplied by $\Delta$ and can therefore be grouped together. Similarly, $\delta$ will also vanish in most realistic cases: the core portion typically represents only a small fraction of the total demand, so the corresponding constraint is almost always saturated. In contrast, $\sigma$ may be nonzero, which in turn implies that $\lambda \neq 0$ given the signs in the equation. Finally, observe that $\pi$ is indexed by agent and bid, while $\lambda$ is indexed by zone. This implies that, in most cases, prices must coincide with $\lambda$, effectively leading to a pay-as-clear market.  

In the Gauss-Seidel best-response approach, on the contrary, each producer $n$ computes its own value of $\lambda_n$, meaning that they are no longer required to bid at the same price and may even submit arbitrarily high bids without immediate repercussions.    
\end{comment}

The solution obtained via MARL simulation is stationary with respect to the reward functions used by the agents. There is no guarantee that this is an exact equilibrium, but the aim is to simulate the agent interactions under varying information-sharing scenarios to analyze strategic behaviors and their impacts on market dynamics.

We first analyze the average cost to run the market, computed by summing the average profits of each producer over the full data available. In particular, for MARL producers, we consider the cost for the final episode of their training, where we assume they reached stable bidding strategies. The results are reported in Fig.~\ref{fig:avg_cost}. The two exact algorithms produce higher costs than the learning-based counterpart, which can be explained by the form of the constraints, as detailed above, and by the lack of actual competition. In the integrated solution, the problem is solved by an exogenous external coordinator, so by definition there will be no competition; while for the best response, each producer is free to update its decision variables uncoupled from the others. We can see the average cost for MARL simulation with only 4 learning producers is very low, but this is also affected by the presence of static producers that always bid at marginal cost, driving the electricity price down.
\begin{figure}[h]
    \centering
    \includegraphics[width=0.9\linewidth]{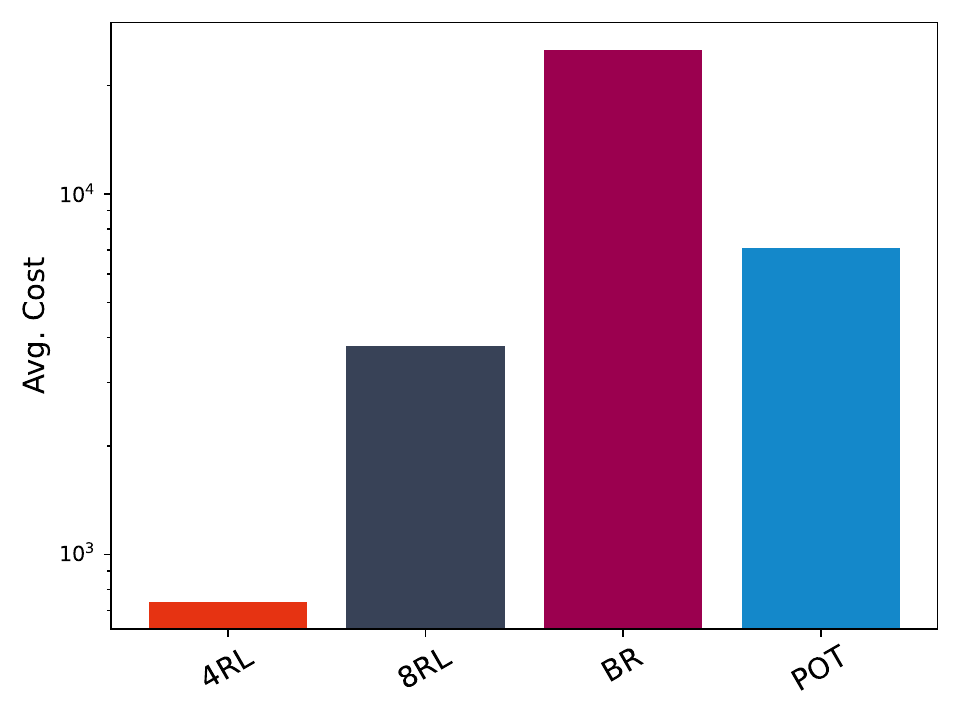}
    \caption{Average cost of running the market, obtained with different algorithms.}
    \label{fig:avg_cost}
\end{figure}
It is interesting to analyze how the profit is shared between producers in the different solutions. To that purpose, we rely on the Gini index $\mathscr G$ \cite{gini_measure_1936}. Given a finite sequence of profits ranked in increasing order $J^{(1)}\leq J^{(2)}\leq\dots\leq J^{(n)}$, where $J^{(l)}$ denotes the $l$-th highest profit, $\mathscr G$ can be obtained as:
\begin{equation*}
    \mathscr G \defeq \frac{1}{n}\left(n+1-2\left(\frac{\sum_{i=1}^n(n+1-i)J^{(i)}}{\sum_{i=1}^n J^{(i)}}\right)\right).
\end{equation*}
The lower $\mathscr G$ is, the more equal is the sharing of profits. We report the results of this analysis in Tab.~\ref{tab:gini}. We can see that, even if the BR and potential solutions produce higher costs, they also obtain a better overall equality index, with the two MARL based simulations getting the highest index. Once again this can be explained by the fact that the multi-agent simulation is unregulated: each learning agent can bid anything it wants, which creates a situation where few producers can drive others' profit almost to zero, using their market power. Looking at the index computed by zone, the differences between classical and learning-based algorithm stay the same.
\begin{table}[h]
\caption{Gini index for all simulations.}
\centering
\begin{tabular}{lccc}
\toprule
Simulation &  Overall $\mathscr G$ &  Germany $\mathscr G$& Austria $\mathscr G$\\ \midrule
Best-Response & $0.43$ & $0.16$ & $0.15$  \\
%Best Response (uncoupled) & $0.41$ & $0.14$ & $0.16$\\
Potential Solution & $0.51$ & $0.47$ & $0.28$\\
%Potential Solution (uncoupled) & $0.49$ & $0.42$ & $0.47$\\
4 RL & $0.64$ & $0.52$ & $0.67$ \\
%4 RL (uncoupled) & $0.47$ & $0.32$ & $0.38$\\
8 RL & $0.77$ & $0.69$ & $0.33$\\ \bottomrule
%8 RL (uncoupled) & $0.79$ & $0.71$ & $0.34$\\ \bottomrule
\end{tabular}
\label{tab:gini}
\end{table}

We conclude with a comparison of the computational time required by each algorithm to simulate the market for the full dataset. In Tab.~\ref{tab:comp_times} we report the evaluation and training times: the two classical algorithms obviously have no training time, while this is the most time-consuming part for MARL. Further, MARL achieves faster evaluation time than the exact methods. The BR algorithm's evaluation time is longer than the training of 8 RL agents, due to the sequential calls of the optimization solver.
\begin{table}[h]
\caption{Algorithms time for evaluation and training.}
    \centering
    \begin{tabular}{lcc}
    \toprule
    Algorithm & Evaluation Time & Training Time \\
    \midrule
    Best-Response & $\sim 8$h & -\\
    Potential & $\sim 10$m & -\\
    4 RL & $\sim 3$s & $\sim 1.5$h\\
    8 RL & $\sim 4$s & $\sim 2$h\\
    \bottomrule
    \end{tabular}
    \label{tab:comp_times}
\end{table}

\begin{figure*}[t]
    \centering
    \includegraphics[scale=0.5]{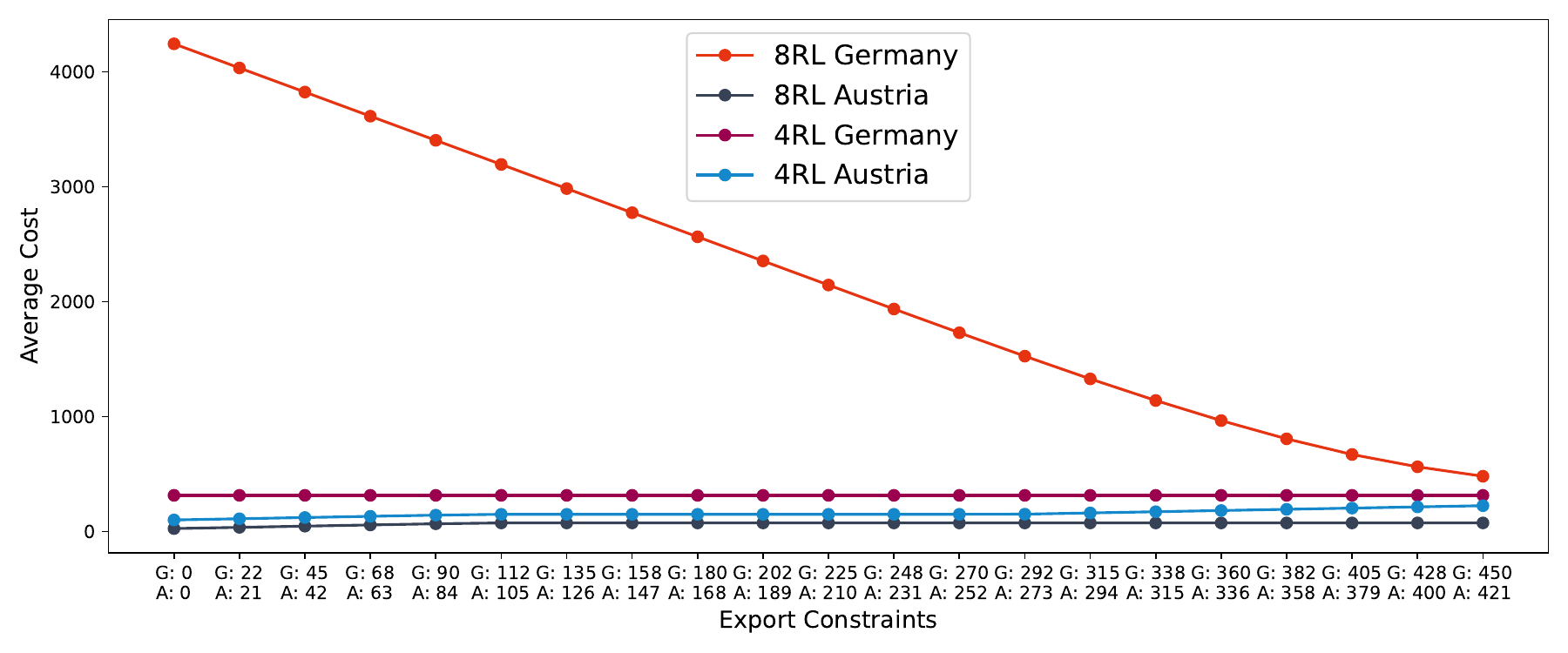}
    \caption{Average cost of running the market with 8 and 4 RL agents, evaluated over different values for the export constraints. Each element on the x-axis corresponds to a couple of export constraints: \texttt{G} stands for Germany and \texttt{A} for Austria, the constraints are in MW.}
    \label{fig:avg_cost_export}
\end{figure*}

\subsection{Effects of Coupling Between Zones}
We now analyze how the two zones influence each other in our model. To do so we simulate the market with a range of different values for the export limit $E_z$ in Eq.~\eqref{eq:export_c}. We write the export limit for Germany (Austria) as a fraction of the maximal demand of Austria (Germany): $E_G = c_G\cdot\max{D_A}$, and we explore the ranges $c_G\in[0,2]$ and $c_A\in[0,0.2]$, thus we go from uncoupled zones to a scenario where the larger zone (Germany) has an export limit which is double the total demand of the smaller zone (Austria).

To quickly evaluate all of these values we use the MARL simulations, in both cases (4 and 8 RL agents). We report in Fig.~\ref{fig:avg_cost_export} the average cost to run the market, computed in the same way as for Fig.~\ref{fig:avg_cost}, for the range of constraints values. In particular, we trained the agents with $c_G=0.4,c_A=0.04$, and used them on all other values. %When analyzing these results, the actual costs are not particularly relevant, as they depend on the demand and other external factors, while the trend of increase/decrease is what we have to interpret.
We can see that in the case of $4$ RL agents, the average cost is not influenced by the export constraint: this is likely due to the fact that the presence of static agents, always bidding at marginal price, already drives the cost down, and it is not possible to decrease it further. The case with $8$ RL agents is much more interesting: we observe that the cost for Austria is rather stable, while the one for Germany is basically linearly decreasing. We can interpret this behavior considering the differences between the producers in Austria and Germany: in the smaller zone we have less producers, with lower capacities and marginal prices, while in the larger Germany the producers have higher marginal prices. Once the Austrian producers can enter the German market with their leftover capacity, they can bid the same way they would bid for their local market, which in turns, forces the German producers to bid lower, in order to be more competitive. Furthermore, since it is the market that decides how to divide each bid between zone, a producer can not create a priori a specific bid for a specific zone, each bid needs to be \textit{general} enough to be competitive on every zone available.
We conclude by mentioning \cite{di_cosmo_welfare_2020}, where the authors simulated a more economically sophisticated model of the European electricity market, to test what the addition of another smaller zone would do to the costs in the French zone. Their results show the same trend as ours, with the smaller zone obtaining similar cost, while the larger one decreases its own. The main advantage, using MARL based simulation, is that we can quickly test a range of different settings, not just a single scenario, providing a less detailed but broader view of the effect of different market designs. Further, using pretrained agents, we are testing how a bidding strategy developed for a certain market design translates to a different one, when there is no time to adapt. This is relevant for scenarios where the coupling may change due to grid failures, with no warning.

%As mentioned above, in our market model we are interested in the overall behavior, and not in the specific values for the costs; nonetheless comparing our simulation with  \cite{di_cosmo_welfare_2020} we can validate our ancillary market model. \textcolor{blue}{We should emphasize what are our methodological advantages wrt to \cite{di_cosmo_welfare_2020}, and slightly reformulate the end of the paragraph to say we find similar conclusions with different approaches focusing on different things...}

\section{Conclusions}
\label{sec:conclusions}
In this work, we rely on an adversarial Stackelberg game model of the multi-zonal ancillary markets, to study the impact of zonal coupling, that we recast as a nonconvex generalized Nash game with side constraints involving an associated price clearance to be satisfied by the equilibria. Further, relying on the potential structure of the generalized Nash game, we show the existence of a Nash equilibrium. Equilibrium can be reached making use of the game structure and relying on reformulations. Specifically, taking advantage of the potential structure of the game, we compare two exact approaches: Gauss-Seidel best-response and an integrated optimization reformulation. Another paradigm relies on the implementation of a multi-agent reinforcement learning (MARL) framework to simulate decentralized agent interactions, allowing us to study strategic behaviors and their impact on market dynamics. We validate the three algorithmic approaches using real data from the ancillary market between Germany and Austria.
Our results show that, computationally, the MARL simulation (counting training and evaluation time) is more efficient than the best-response method, while remaining a decentralized approach.
The two exact algorithms produce solutions with significantly higher total market costs compared to MARL; although the latter results in a more unfair allocation of profits among players. Numerically, we analyze different levels of coupling between zones, highlighting that tighter coupling may reduce costs in the larger zone; a result which is consistent with findings reported in the literature, e.g., in \cite{di_cosmo_welfare_2020}.

%\section*{Acknowledgments}
%This should be a simple paragraph before the References to thank those individuals and institutions who have supported your work on this article.

\appendices

%{\appendix[Proof of the Zonklar Equations]
%Use $\backslash${\tt{appendix}} if you have a single appendix:
%Do not use $\backslash${\tt{section}} anymore after $\backslash${\tt{appendix}}, only $\backslash${\tt{section*}}.
%If you have multiple appendixes use $\backslash${\tt{appendices}} then use $\backslash${\tt{section}} to start each appendix.
%You must declare a $\backslash${\tt{section}} before using any $\backslash${\tt{subsection}} or using $\backslash${\tt{label}} ($\backslash${\tt{appendices}} by itself starts a section numbered zero.)}

%{\appendices
%\section*{Proof of the First Zonklar Equation}
%Appendix one text goes here.
% You can choose not to have a title for an appendix if you want by leaving the argument blank
%\section*{Proof of the Second Zonklar Equation}
%Appendix two text goes here.}

\bibliographystyle{IEEEtran}
\bibliography{bibliography}

\newpage
\begin{comment}
    \section{Biography Section}
If you have an EPS/PDF photo (graphicx package needed), extra braces are
 needed around the contents of the optional argument to biography to prevent
 the LaTeX parser from getting confused when it sees the complicated
 $\backslash${\tt{includegraphics}} command within an optional argument. (You can create
 your own custom macro containing the $\backslash${\tt{includegraphics}} command to make things
 simpler here.)
 
\vspace{11pt}

\bf{If you include a photo:}\vspace{-33pt}
\begin{IEEEbiography}[{\includegraphics[width=1in,height=1.25in,clip,keepaspectratio]{fig1}}]{Michael Shell}
Use $\backslash${\tt{begin\{IEEEbiography\}}} and then for the 1st argument use $\backslash${\tt{includegraphics}} to declare and link the author photo.
Use the author name as the 3rd argument followed by the biography text.
\end{IEEEbiography}

\vspace{11pt}

\bf{If you will not include a photo:}\vspace{-33pt}
\begin{IEEEbiographynophoto}{John Doe}
Use $\backslash${\tt{begin\{IEEEbiographynophoto\}}} and the author name as the argument followed by the biography text.
\end{IEEEbiographynophoto}
\end{comment}

\vfill

\end{document}